\newtheorem*{claim*}{Claim}
\newtheorem*{lemma*}{Lemma}
\newtheorem{prop}{Proposition}
\newtheorem*{prop*}{Proposition}
\newtheorem{theorem}{Theorem}
\newtheorem*{theorem*}{Theorem}
\newtheorem{defn}{Definition}
\newtheorem*{defn*}{Definition}
\newtheorem*{convention*}{Convention}
\newtheorem{fact}{Fact}
\newtheorem{remark}{Remark}
\newcommand{\bedit}[1]{\Authoredit{magenta}{#1}}
\newcommand{\red}[1]{\Authoredit{red}{#1}}
\newcommand{\bedit}[1]{#1}
\newcommand{\red}[1]{{#1}}
\def\pe{P_{\mathsf{even}}}
\def\po{P_{\mathsf{odd}}}
\def\pv{\mathbf{p}}
\def\qv{\mathbf{q}}
\definecolor{bleudefrance}{rgb}{0.19, 0.55, 0.91}
\definecolor{blizzardblue}{rgb}{0.67, 0.9, 0.93}
\theoremstyle{plain}
\begin{document}

\title{A $\frac{4}{3}$-Approximation Algorithm\\for Half-Integral Cycle Cut Instances of the TSP}
\iftoggle{notblinded}{
\author{Billy Jin\\Cornell University \and Nathan Klein\\University of Washington \and David P.\ Williamson\\Cornell University}
}
{\author{}}

\pagenumbering{gobble}

\maketitle

\begin{abstract}
A long-standing conjecture for the traveling salesman problem (TSP) states that the integrality gap of the standard linear programming relaxation of the TSP (sometimes called the Subtour LP or the Held-Karp bound) is at most 4/3 for symmetric instances of the TSP obeying the triangle inequality; that is, the cost of an optimal tour is at most 4/3 times the value of the value of the corresponding linear program.  There is a variety of evidence in support of the conjecture (see, for instance, Goemans \cite{Goemans95} and Benoit and Boyd \cite{BenoitB08}).  It has long been known that the integrality gap is at most 3/2 (Wolsey \cite{Wolsey80}, Shmoys and Williamson \cite{ShmoysW90}).  Despite significant efforts by the community, the conjecture remains open.

In this paper we consider the half-integral case, in which a feasible solution to the LP has solution values in $\{0, 1/2, 1\}$.  Such instances have been conjectured to be the most difficult instances for the overall four-thirds conjecture \cite{SchalekampWvZ14}.  Karlin, Klein, and Oveis Gharan \cite{KKO20}, in a breakthrough result, were able to show that in the half-integral case, the integrality gap is at most 1.49993; Gupta et al.\ \cite{GuptaLLMNS22} showed a slight improvement of this result to 1.4983.   Additionally, this result led to the first significant progress on the overall conjecture in decades; the same authors showed the integrality gap of the Subtour LP is at most $1.5-\epsilon$ for some $\epsilon > 10^{-36}$ \cite{KKO21b}.

With the improvements on the 3/2 bound remaining very incremental, even in the half-integral case, we turn the question around and look for a large class of half-integral instances for which we can prove that the 4/3 conjecture is correct, preferably one containing the known worst-case instances.

In Karlin et al.'s work on the half-integral case, they perform induction on a hierarchy of critical tight sets in the support graph of the LP solution, in which some of the sets correspond to {\em cycle cuts} and the others to {\em degree cuts}.  Here we show that if all the sets in the hierarchy correspond to cycle cuts, then we can find a distribution of tours whose expected cost is at most 4/3 times the value of the half-integral LP solution; sampling from the distribution gives us a randomized 4/3-approximation algorithm.  We note that known bad cases for the integrality gap have a gap of 4/3 and have a half-integral LP solution in which all the critical tight sets in the hierarchy are cycle cuts; thus our result is tight.

Our overall approach is novel.  Most recent work has focused on showing that some variation of the Christofides-Serdyukov algorithm \cite{Christofides76, Serdyukov78} that combines a randomly sampled spanning tree plus a $T$-join (or a matching) can be shown to give a bound better than 1.5. 
Here we show that for any point in a region of ``patterns'' of edges incident to each cycle cut, we can give a distribution of patterns connecting all the child cycle cuts such that the distribution of patterns for each child also falls in the region.
This region gives rise to a distribution on Eulerian tours in which each edge in the support of the LP is used at most four-thirds of its LP value of the time, which then gives the result.

\end{abstract}

\newpage

\pagenumbering{arabic}
\setcounter{page}{1}

\section{Introduction}

In the traveling salesman problem (TSP), we are given a set of $n$ cities and the costs $c_{ij}$ of traveling from city $i$ to city $j$  for all $i,j$, and the goal of the problem is to find the least expensive tour that visits each city exactly once and returns to its starting point.
An instance of the TSP is called {\em symmetric} if $c_{ij} = c_{ji}$ for all $i,j$; it is {\em asymmetric} otherwise.  Costs obey the {\em triangle inequality} (or are {\em metric}) if $c_{ij} \leq c_{ik} + c_{kj}$ for all $i,j,k$.    For ease of exposition, we consider the problem input as a complete graph (undirected for symmetric instances, and directed for asymmetric instances) $G=(V,E)$ for the set of cities $V$, with $c_e = c_{ij}$ for edge $e=(i,j)$.  All instances we consider will be symmetric and obey the triangle inequality.

In the mid-1970s, Christofides \cite{Christofides76} and Serdyukov \cite{Serdyukov78} each gave a
$\frac{3}{2}$-approximation algorithm for the symmetric TSP with
triangle inequality.  The algorithm computes a minimum-cost spanning tree, and then finds a minimum-cost $T$-join on the odd degree vertices of the tree to compute a connected Eulerian subgraph. Because the edge costs satisfy the triangle inequality, this Eulerian subgraph can be ``shortcut'' to a tour of no greater cost.  Until very recently, this was the best approximation factor known for the symmetric TSP with triangle inequality, although over the last decade substantial progress was made for many special cases and variants of the problem; for example, in {\em graph TSP}, the input to the problem is an unweighted connected graph, and the cost of traveling between any two nodes is the number of edges in the shortest path between the two nodes.  A sequence of papers led to a 1.4-approximation algorithm for this problem due to Seb\H{o} and Vygen \cite{SeboV14}.  

In a breakthrough result, Karlin, Klein, and Oveis Gharan \cite{KKO21} gave the first approximation algorithm with performance ratio for the general case better than 3/2, although the amount by which the bound was improved is quite small (approximately $10^{-36}$).  The algorithm follows the Christofides-Serdyukov template by selecting a random spanning tree, then using a $T$-join on the odd degree vertices of the tree to create a connected Eulerian subgraph.  The achievement of the paper is to show that \red{sampling a random spanning tree from the max-entropy distribution}
gives a distribution of odd degree nodes in the spanning tree such that the expected cost of the $T$-join is cheaper (if marginally so) than in the Christofides-Serdyukov analysis.

One special case of the (symmetric, metric) TSP is known as the {\em half-integral} case.   To understand the half-integral case, we need to give a well-known LP relaxation of the TSP, first used by Dantzig, Fulkerson, and Johnson \cite{DantzigFJ54}, and sometimes called the {\em Subtour LP} or the {\em Held-Karp bound} \cite{HeldK71}.  The LP relaxation is as follows: 
\lps & &
& \mbox{Min} & \sum_{e \in E} c_{e} x_{e} \\
& \mbox{subject to:} & & & x(\delta(v)) = 2, & \forall v \in V,\\
& & & & x(\delta(S)) \geq 2, & \forall S\subset V, S \neq \emptyset,\\
& & & & 0 \leq x_{e} \leq 1, & \forall e \in E,  \elps 
where $\delta(S)$ is the set of all edges with exactly one endpoint in $S$ and we use the shorthand that $x(F) = \sum_{e \in F} x_e$.  A half-integral solution to the Subtour LP is one such that $x_e \in \{0, 1/2, 1\}$ for all $e \in E$, \red{and a half-integer instance of the TSP is one whose LP solution is half-integral}.

\begin{figure}
\begin{center}
\includegraphics[height=2in]{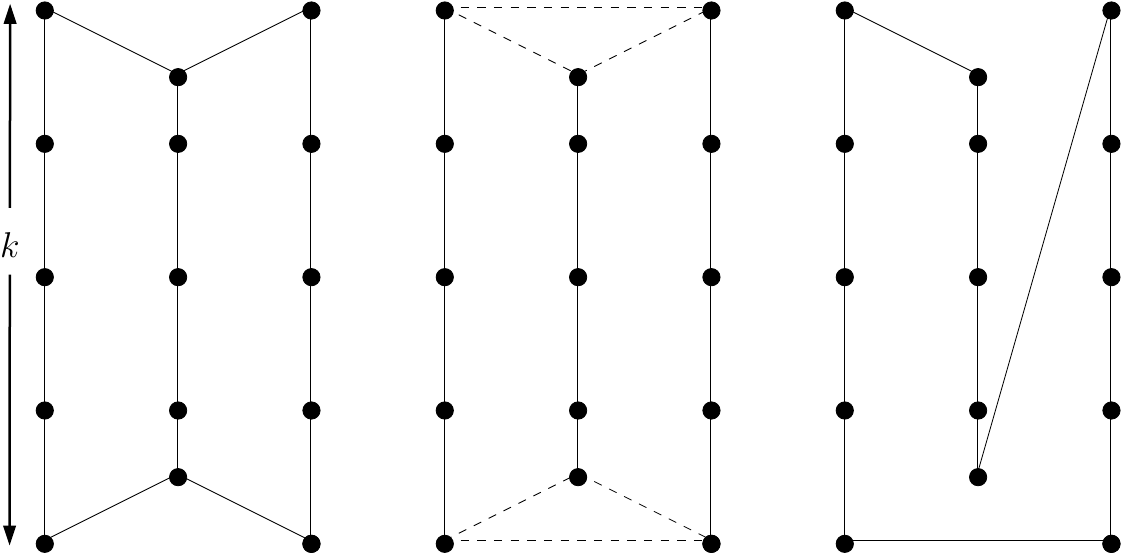}
\end{center}
\caption{Illustration of a known worst-case example for the integrality gap for
the symmetric TSP with triangle inequality.  The figure on the
left gives a graph, and costs $c_{ij}$ are the shortest path
distances in the graph.  The figure in the center gives the LP
solution, in which the dotted edges have value 1/2, and the solid
edges have value 1.  The figure on the right gives the optimal
tour.  The ratio of the cost of the optimal tour to the value of the LP solution tends to 4/3 as $k$ increases.} \label{fig:badexample}
\end{figure}

The {\em integrality gap} of an LP relaxation is the worst-case ratio of an optimal integer solution to the linear program to the optimal linear programming solution.  Wolsey \cite{Wolsey80} (and later Shmoys and Williamson \cite{ShmoysW90}) showed that the analysis of the Christofides-Serdyukov algorithm could be used to show that the integrality gap of the Subtour LP is at most 3/2.   It is known that the integrality gap of the Subtour LP is at least 4/3, due to a set of graph TSP instances shown in Figure \ref{fig:badexample}, and another set of weighted instances due to Boyd and Seb\H{o} \cite{BoydS21} known as $k$-donuts.  These instances are half-integral instances.  Schalekamp, Williamson, and van Zuylen \cite{SchalekampWvZ14} have conjectured that half-integral instances are the worst-case instances for the integrality gap.  It has long been conjectured\footnote{The first place that the authors are aware of a published statement of the conjecture is in a 1995 paper of Goemans \cite{Goemans95}, but the conjecture was in circulation earlier than that.} that the integrality gap is exactly 4/3, but until the work of Karlin et al.\ there had been no progress on that conjecture for several decades.  Goemans \cite{Goemans95} and Benoit and Boyd \cite{BenoitB08} give evidence that the 4/3 conjecture is correct.

In the case of half-integral instances, some results are known.  M\"omke and Svensson \cite{MomkeS16} have shown a 4/3-approximation algorithm for half-integral graph TSP, also yielding an integrality gap of 4/3 for such instances; because of the worst-case examples of Figure \ref{fig:badexample}, their result is tight.  Boyd and Carr \cite{BoydC11} give a 4/3-approximation algorithm (and an integrality gap of 4/3) for a subclass of half-integer solutions they call triangle points (in which the half-integer edges form disjoint triangles); the examples of Figure \ref{fig:badexample} show that their result is tight also.  Haddadan and Newman \cite{HaddadanN19} prove interesting results for the half-integral case with symmetric and metric costs.  Boyd and Seb\H{o} \cite{BoydS21} give an upper bound of 10/7 for a subclass of half-integral solutions they call square points (in which the half-integer edges form disjoint 4-cycles).  In a paper released just prior to their general improvement, Karlin, Klein, and Oveis Gharan \cite{KKO20} (KKO) gave a 1.49993-approximation algorithm in the half-integral case;  in particular, they show that given a half-integral solution, they can produce a tour of cost at most 1.49993 times the value of the corresponding objective function. Gupta, Lee, Li, Mucha, Newman, and Sarkar \cite{GuptaLLMNS22} improve this factor to 1.4983.  

With the improvements on the 3/2 bound remaining very incremental for weighted instances of the TSP, even in the half-integral case, we turn the question around and look for a large class of \red{weighted} half-integral instances for which we can prove that the 4/3 conjecture is correct, preferably one containing the known worst-case instances.

To define our instances, we turn to some terminology of KKO.  The KKO result uses induction on a hierarchy of {\em critical tight} sets of the half-integral LP solution $x$.  A set $S \subset V$ is {\em tight} if the corresponding LP constraint is met with equality; that is, $x(\delta(S))=2$.  A set $S$ is {\em critical} if it does not cross any other tight set; that is, for any other tight set $T$, either $S \cap T = \emptyset$ or $S \subseteq T$ or $T \subseteq S$.  The critical tight sets then give rise to a natural tree-like hierarchy based on subset inclusion.  

The KKO algorithm constructs a tour on the hierarchy by sampling a random spanning tree on the child nodes for each critical tight set, starting with the minimal sets in the hierarchy and working bottom up.  Following Christofides-Serdyukov, they then compute a minimum-cost $T$-join on the odd degree vertices of the resulting tree.  In their algorithm, they differentiate between {\em cycle cuts}  (in which the child nodes of a parent are linked by pairs of edges in a chain) and {\em degree cuts} (in which the child nodes of a parent form a 4-regular graph; more detail is given in subsequent sections). 

In this paper, we will consider half-integral instances in which there are only cycle cuts, which we will refer to as half-integral cycle cut instances.   Our contribution is to give a randomized $\frac{4}{3}$-approximation algorithm for these instances; it generates a distribution over connected Eulerian subgraphs with expected cost at most 4/3 the value of the LP solution.  More precisely, we give a distribution over connected Eulerian subgraphs such that each edge $e$ is used with expectation at most $\frac{4}{3}x_e$, which implies the result (note that edges are sometimes doubled in the Eulerian graph).  It is not hard to show that both the bad examples in Figure \ref{fig:badexample} and the $k$-donut instances of Boyd and Seb\H{o} \cite{BoydS21} are cycle cut instances (Boyd and Carr's result for triangle points works for the examples of Figure 1, but not for $k$-donuts). Thus our bound of 4/3 is tight and cannot be improved; furthermore, our result works for the known worst-case instances.

Our approach to the problem is novel and does not use the same Christofides-Serdyukov framework as employed by Karlin et al. \cite{KKO20} and others. 
Instead, we perform a top-down induction on the hierarchy of critical tight sets. For each set in the hierarchy, we define a set of ``patterns'' of edges incident on it such that the set has even degree. 
\red{
For each pattern, we give a distribution of edges connecting the chain of child nodes in the cycle cut, which induces a distribution of patterns on each child.  Crucially, we then show that there is a \emph{feasible region} $R$ of {distributions over patterns}, such that if the distribution of patterns on the parent node belongs to $R$, then the induced distribution on patterns on each child node also belong to $R$.
}


Our result leads to several interesting open questions.  The first is whether it is possible to extend the 4/3-approximation algorithm to the general half-integral case.  We believe it should be possible to improve modestly on the 1.4983-approximation of Gupta et al. \cite{GuptaLLMNS22} by combining our result for cycle cuts with some additional ideas.  We do not elaborate on this potential improvement because both the improvement and the additional ideas are incremental relative to the ideas introduced in this paper.  The second open question is whether our result extends to the case of cycle cuts for non-half-integral solutions. We believe this to be possible through a more refined understanding of the patterns that result from considering non-half-integral solutions.  A third open question is whether we can unify our result and that of Boyd and Carr on triangle points.  Triangle points need not be cycle cut instances, and it would be interesting to know of a single class of half-integral solutions that have an integrality gap of 4/3 and which captures both cycle cut instances and triangle points.

One major implication of our result is to focus attention on the half-integral degree cut case, in which every vertex has degree four, all edges have LP value 1/2, and every non-trivial cut has at least six edges in it.  While it is not clear whether a $\frac43$-approximation algorithm working on just these instances can be combined with our result for an overall $\frac43$-approximation algorithm for half-integral solutions, it is clear that understanding the degree cut case is a necessary next step to obtain any significant improvement in the approximation factor.  We believe that giving a feasible region on the distribution of patterns as described above will be useful in obtaining an improved approximation that integrates both degree and cycle cuts.

\subsection{Technical Overview}
We now give a more in-depth overview of our algorithm and proof techniques.  

Given a half-integral LP solution $x$, we construct a 4-regular 4-edge-connected multigraph $G$ by including every edge with $x_e=1/2$ once and every edge with $x_e=1$ twice. Therefore, in $G$ the tight sets $S$ have $|\delta(S)|=4$. For the remainder of this paper we will refer to this graph $G$ instead of a half integral solution $x$. 

Our strategy is to exhibit a convex combination of Eulerian tours that uses every edge at most $\frac23 = \frac43 x_e$ of the time. Using each edge of the graph at most $\frac23$ of the time immediately implies that when we sample a tour from this distribution, its expected cost will be at most $\frac{4}{3}$ times the cost of $x$.   This allows us to prove our main theorem.

\begin{theorem}
\label{thm:main}
There is a randomized $4/3$-approximation algorithm for half-integral cycle cut instances of the TSP that produces an Eulerian tour with expected cost at most $\frac43 \sum_{e \in E} c_ex_e.$
\end{theorem}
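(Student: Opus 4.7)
The plan is to exhibit a distribution over Eulerian multi-subgraphs of the $4$-regular multigraph $G$ (obtained from $x$ by including each edge $e$ with $x_e = 1/2$ once and each edge with $x_e = 1$ twice) such that every edge of $G$ is used in expectation at most $2/3$ of the time. Since $\sum_{e \in E(G)} c_e = \sum_{e \in E} 2 c_e x_e$, such a distribution has expected total edge-cost at most $\frac{2}{3}\sum_{e \in E(G)} c_e = \frac{4}{3}\sum_{e \in E} c_e x_e$, and shortcutting a sampled connected Eulerian subgraph via the triangle inequality yields a Hamiltonian tour of no larger cost, proving the theorem.

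To construct this distribution, I would walk the laminar hierarchy of critical tight sets top-down. For each critical tight set $S$, we have $|\delta_G(S)|=4$, and the even-degree constraint on $S$ in the final Eulerian subgraph forces the multiset of edges from $\delta_G(S)$ that are used to have even total multiplicity; I would call such a multiset a \emph{pattern} on $S$. For every pattern $\pi$ on $S$, I would prescribe a distribution $D_\pi$ over configurations of the chain edges joining the children of $S$, together with an assignment of the four boundary edges to children, that is consistent with $\pi$ and that induces, for each child $S'$ of $S$, a distribution over patterns on $S'$. The centerpiece of the argument is to identify a convex \emph{feasible region} $R$ of probability distributions over patterns such that (i) the trivial pattern at the root of the hierarchy lies in $R$, (ii) whenever a parent's pattern distribution lies in $R$, the pattern distribution induced on each child via the prescribed $D_\pi$'s also lies in $R$, and (iii) every distribution in $R$ uses each of the four boundary edges with marginal probability at most $2/3$. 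A straightforward induction on the height of the hierarchy then certifies that every critical tight set sees a pattern distribution in $R$, and (iii) yields the required $2/3$ edge-usage bound.

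The main technical obstacle is defining the region $R$ together with the connection schemes $D_\pi$ and verifying the invariance (ii). Because patterns on a $4$-cut come from a small finite set, this is a finite polyhedral question: one enumerates patterns, describes the induced-distribution map $T$ determined by the family $\{D_\pi\}$ as a stochastic linear map on the pattern simplex, and exhibits an explicit polytope $R$ with $T(R) \subseteq R$ together with the edge-marginal inequalities of (iii). I expect that the cycle cut structure, in which the children of $S$ are arranged in a chain joined by pairs of edges, will allow the connection schemes to be designed so that patterns propagate in a structured, symmetric way down the chain, and that $R$ can be cut out by a small number of linear inequalities symmetric under the natural automorphisms of a $4$-cut. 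The delicate balance is that the freedom to randomize over chain-edge configurations must simultaneously enforce invariance of $R$ and keep all four boundary marginals below $2/3$; verifying this balance for every pattern $\pi$, likely through careful case analysis and an LP feasibility check, is the crux of the proof.
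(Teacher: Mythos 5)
Your plan is the same high-level strategy as the paper: build the $4$-regular multigraph, walk the hierarchy of critical tight sets top-down, define ``patterns'' on each $4$-cut, prescribe randomized connection rules for the chain of children, and exhibit an invariant feasible region $R$ of pattern distributions whose members use each edge at most $\frac23$ of the time. But everything you have written stops exactly where the paper's actual work begins, and you say so yourself (``verifying this balance \ldots is the crux of the proof''). You never specify the patterns (the paper groups them into four states, identifying patterns up to doubled edges and pairing mirror-image patterns within a state), never give the connection rules, and never exhibit $R$ or prove $T(R)\subseteq R$. The concrete content that is missing includes: (a) the transition maps are not a single stochastic map $T$ but two families of chains depending on the parity of the number of children of the cut (with achievable transition probabilities constrained by ranges such as $[\frac13,1]$ and $[0,\frac23]$ coming from the worst cases $k=2$ and $k=3$); (b) the induced distribution on a child is the parent's image under the chain \emph{composed with a swap of the first two coordinates} when the child is a twisted rather than straight cycle cut, so invariance must be checked for both orderings; and (c) the explicit region, $R=\{p\ge 0:\ \sum_i p_i=1,\ p_1+p_2=\frac23,\ p_2+p_4\ge\frac13\}$ (\Cref{def:region}), together with explicit choices of transition parameters (for even cuts a specific $z$ with $w=1$; for odd cuts all parameters set to $\frac23$) that map $R$ into itself.

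One point in your sketch that would actively mislead you if pursued naively: you ask only that distributions in $R$ use each boundary edge with probability \emph{at most} $\frac23$, suggesting a region cut out by inequalities like $p_1+p_2\ge\frac23$. The paper's necessity argument (\Cref{thm:region_necessary}) shows this cannot work: if $p_1+p_2>\frac23$ at some cut, then after applying the even chain twice the first two coordinates must sum to strictly less than $\frac23$, violating the edge-usage bound further down. So any invariant region compatible with the $\frac43$ bound must sit on the hyperplane $p_1+p_2=\frac23$ exactly, and must additionally satisfy $p_2+p_4\ge\frac13$; discovering and verifying this equality-plus-inequality structure (and that the region is nonempty and closed under the straight/twisted swap of induced child distributions) is precisely the missing heart of the proof. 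As it stands, your proposal is a correct framing of the problem as a finite polyhedral invariance question, but it does not yet constitute a proof of \Cref{thm:main}.
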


To construct our distribution of tours, we work on the cycle cut hierarchy from the top down. Each cut in the hierarchy is either a singleton vertex, or a cycle cut that contains two or more tight cuts inside.  
\red{At the root of the hierarchy is a cycle cut $S$ such that $V \setminus S$ is a single vertex.}
For every cycle cut, its child cuts are linked together in a chain, each cut connected to the next by a pair of edges.

\red{Our construction begins by specifying a distribution of \emph{patterns} entering the root cycle cut, where a "pattern" refers to a given multiset of edges that enter the cycle cut. }
We then work down the hierarchy to determine the distribution of patterns entering every cycle cut. Inductively, consider a cut $C$ in the hierarchy, and suppose we have already determined the distribution of patterns that enter $C$. We describe \emph{rules} that dictate how to connect the child cuts in the chain inside of $C$, as a function of the pattern that enters $C$. This in turn determines the distribution of patterns entering each child of $C$.  

\red{
The crux of the argument is to show a \emph{feasible region} $R$ of distributions over patterns, such that: 1) If the distribution of patterns on the parent node belongs to $R$, then the induced distribution on patterns on each child node also belong to $R$, and 2) the distributions in the region use each edge $e$ at most $\frac{4}{3} x_e$ of the time. We are able to impose any distribution we want on the root, thus this is sufficient to give the result.
}

Conceptually key to our analysis is a visualization of the process through the lens of \emph{Markov chains}. Each pattern corresponds to a state in the Markov chain. Given a distribution over patterns entering some cut $C$, applying the transition matrix gives the distribution of patterns entering the children of $C$.
\red{Using this language, the region $R$ will satisfy the property that $PR \subseteq R$, where $P$ is the transition matrix of the Markov chain. In fact, there will turn out to be two Markov chains, depending on if the parity of the number of cuts inside $C$ is even or odd; letting $\pe$ and $\po$ be the transition matrices of these two chains, our proof will show that $\pe \, R \subseteq R$ and $\po \, R \subseteq R$.} 

\red{
From the construction, one can easily sample an Eulerian tour in a top-down manner. To do so, we first choose an arbitrary distribution of patterns $\pv \in R$ from the feasible region. Then, we sample from $\pv$ a pattern entering the topmost cycle cut.
Next, supposing that we are at a cycle cut $C$ and have determined the pattern entering it, we follow the (randomized) rules to sample a set of edges to connect the children of $C$. Applying this procedure all the way down the hierarchy gives an Eulerian tour. By design, this (random) tour satisfies the property that the distribution of edges entering each cut in the hierarchy belongs to $R$, and that each edge is used at most $\frac43 x_e$ of the time in expectation.
}
\tikzset{
	pics/Graph/.style n args={1}{
	code = {
	
	\ifthenelse{#1>0}{
		\node[state] (u1) {};
        \node[state] (a1) [right=1cm of u1] {};
        \node[state] (a2) [below=0.5cm of a1] {};
        \node[state] (a3) [below=0.5cm of a2] {};
        \path[-] (u1) edge[midway, thick,below] node {} (a1);
		\path[-] (u1) edge[thick, bend right] node {} (a2);
		\path[-] (u1) edge[bend right] node {} (a3);
    }{
        \node[state] (u1) {};
        \node[state] (a1) {};
        \node[state] (a2) {};
        \node[state] (a3) {};
    }
	
	\ifthenelse{#1>0}{
	\foreach \a\b in {b/1,c/4,d/7} {
		\node[state] (\a0) [below right=0.6cm and {\b cm} of a1] {};
		\node[state] (\a1) [right=1cm of \a0] {};
        		\node[state] (\a2) [below=0.5cm of \a0] {};
        		\node[state] (\a3) [below=0.5cm of \a1] {};
        }
        }{
        \def\offset{2}
        \foreach \a\b\d in {b/1/,c/4/,d/7/} {
        		\foreach \c in {0,1,2} {
        			\node[state] (\a\c) [below right=0.6cm and {\b cm + \offset cm} of a1] {};
        		}
		\node[state] (\a3) [below right=0.6cm and {\b cm + \offset cm} of a1] {};
        }
        }
        
        \ifthenelse{#1>0}{
        \node[state] (e1) [right=11cm of u1] {};
        \node[state] (e2) [below=0.5cm of e1] {};
        \node[state] (e3) [below=0.5cm of e2] {};}{
        \foreach \a in {1,2} {
        		\node[state] (e\a) [right=12.2cm of u1] {};
        }
        \node[state] (e3) [right=12.2cm of u1] {};
        }
        
        \node[state] (l1) [right=12.2cm of u1] {};
        
	\foreach \a in {b,c,d} {
		\foreach \b in {1,2,3} {
			\path[-] (\a0) edge node {} (\a\b);
		}
		\path[-] (\a1) edge node {} (\a2);
		\path[-] (\a1) edge node {} (\a3);
		\path[-] (\a2) edge node {} (\a3);
	}
	
	\foreach \a in {a,e} {
		\path[-] (\a1) edge node {} (\a2);
		\path[-] (\a1) edge[bend right] node {} (\a3);
		\path[-] (\a2) edge[black] node {} (\a3);
	}
	
	\path[-] (u1) edge[midway,thick,bend left=20] node {} (l1);
	\path[-] (b1) edge node {} (c0);
	\path[-] (c3) edge[midway,thick] node {} (d2);
	\path[-] (a2) edge[midway,thick] node {} (b0);
	\path[-] (d1) edge node {} (e2);
	
	\path[-] (a1) edge[midway,thick,bend left=10] node {} (d0);
	
	\path[-] (l1) edge node {} (e1);
	\path[-] (l1) edge[bend left] node {} (e2);
	\path[-] (l1) edge[bend left] node {} (e3);
	
	\ifthenelse{#1>0}{
	\path[-] (b3) edge node {} (c2);
	\path[-] (a3) edge[midway,thick] node {} (b2);
	\path[-] (d3) edge node {} (e3);	
	\path[-] (c1) edge[midway,thick,bend left=10] node {} (e1);
	}{
	\path[-] (b3) edge[bend right=10] node {} (c2);
	\path[-] (a3) edge[below,midway,thick,bend right=10] node {} (b2);
	\path[-] (d3) edge[bend right=10] node {} (e3);
	\path[-] (c1) edge[below,midway,thick,bend left=10] node {} (e1);
	}
	}
	}
}

\tikzset{
	pics/intgap/.style n args={0}{
	code = {
	
    \node[state] (a0) {};
    \node[state] (b0) [below right=2cm and 0.75cm of a0] {};
    \node[state] (c0) [below=4cm of a0] {};
    
    \path[-] (a0) edge[thick,dotted] node {} (b0);
	\path[-] (a0) edge[thick,dotted] node {} (c0);
	\path[-] (b0) edge[thick,dotted] node {} (c0);
	
	\foreach \a/\b in {1/0,2/1,3/2,4/3,5/4} {
		\node[state] (a\a) [right=1.5cm of a\b]{};
		\path[-] (a\b) edge[thick] node {} (a\a);
		\node[state] (b\a) [right=1.5cm of b\b]{};
		\path[-] (b\b) edge[thick] node {} (b\a);
		\node[state] (c\a) [right=1.5cm of c\b]{};
		\path[-] (c\b) edge[thick] node {} (c\a);
    }
    \node[state] (a6) [right=1.5cm of a5]{};
	\path[-] (a5) edge[thick] node {} (a6);
	\node[state] (c6) [right=1.5cm of c5]{};
	\path[-] (c5) edge[thick] node {} (c6);
	
	\path[-] (a6) edge[thick,dotted] node {} (b5);
	\path[-] (a6) edge[thick,dotted] node {} (c6);
	\path[-] (b5) edge[thick,dotted] node {} (c6);
	}
	}
}

\tikzset{
	pics/intgapgraph/.style n args={0}{
	code = {
	
    \node[state] (a0) {};
    \node[state] (b0) [below right=2cm and 0.75cm of a0] {};
    \node[state] (c0) [below=4cm of a0] {};
    
    \path[-] (a0) edge[thick] node {} (b0);
	\path[-] (a0) edge[thick] node {} (c0);
	\path[-] (b0) edge[thick] node {} (c0);
	
	\foreach \a/\b in {1/0,2/1} {
		\node[state] (a\a) [right=1.5cm of a\b]{};
		\node[state] (b\a) [right=1.5cm of b\b]{};
		\node[state] (c\a) [right=1.5cm of c\b]{};
    }
    
    	\foreach \a/\b in {3/2} {
		\node[state] (a\a) [above right=1cm and 1.5cm of a\b]{};
		\node[state] (b\a) [right=1.5cm of b\b]{};
		\node[state] (c\a) [right=1.5cm of c\b]{};
    }
    
        	\foreach \a/\b in {4/3} {
		\node[state] (a\a) [below right=1cm and 1.5cm of a\b]{};
		\node[state] (b\a) [right=1.5cm of b\b]{};
		\node[state] (c\a) [right=1.5cm of c\b]{};
    }
    
    	\foreach \a/\b in {5/4} {
		\node[state] (a\a) [right=1.5cm of a\b]{};
		\node[state] (b\a) [right=1.5cm of b\b]{};
		\node[state] (c\a) [right=1.5cm of c\b]{};
    }
    
    	\foreach \a/\b in {1/0,2/1,3/2,4/3,5/4} {
		\path[-] (a\b) edge[thick,bend right=10] node {} (a\a);
		\path[-] (a\b) edge[thick,bend left=10] node {} (a\a);
		\path[-] (b\b) edge[thick,bend right=10] node {} (b\a);
		\path[-] (b\b) edge[thick,bend left=10] node {} (b\a);
		\path[-] (c\b) edge[thick,bend right=10] node {} (c\a);
		\path[-] (c\b) edge[thick,bend left=10] node {} (c\a);
    }

    \node[state] (a6) [right=1.5cm of a5]{};
	\path[-] (a5) edge[thick,bend right=10] node {} (a6);
	\path[-] (a5) edge[thick,bend left=10] node {} (a6);
	\node[state] (c6) [right=1.5cm of c5]{};
	\path[-] (c5) edge[thick,bend right=10] node {} (c6);
	\path[-] (c5) edge[thick,bend left=10] node {} (c6);
	
	\path[-] (a6) edge[thick] node {} (b5);
	\path[-] (a6) edge[thick] node {} (c6);
	\path[-] (b5) edge[thick] node {} (c6);
	}
	}
}
\section{Preliminaries}
\label{sec:prelim}

Given a half-integral LP solution $x$, we construct a 4-regular 4-edge-connected multigraph $G=(V,E)$ by including a single copy of every edge $e$ for which $x_e=\frac{1}{2}$ and two copies of every edge $e$ for which $x_e=1$. See \cref{fig:ex1} and \cref{fig:ex2} for examples.

\subsection{The structure of minimum cuts}

We state the following for general $k$-edge-connected multigraphs. In our setting, $k=4$.

\begin{figure}[htb]
\begin{center}
\begin{tikzpicture}[inner sep=1.7pt,scale=.8,pre/.style={<-,shorten <=2pt,>=stealth,thick}, post/.style={->,shorten >=1pt,>=stealth,thick}]
\draw [rotate=20,line width=1.1](0,0) ellipse (2cm and 1cm);
\draw [rotate=-20,line width=1.2](-1.35,-0.48) ellipse (2cm and 1cm);
\draw  (-3.8, 0.5) node {$X$};
\draw   (2.3, 0.5) node {$Y$};

\tikzstyle{every node} = [draw, circle,fill=red];
\node at (1.1,0.75)   (){};
\node at (0.6,0.3)   (){};
\path (1.3,0.1) node  (){};
\path (-1.9,0.35) node  (){};
\path (-2.5,0.5) node  (){};
\path (-0.5,0.2) node  (){};
\path (-1.2,-.7) node  (){};
\path (0,-0.7) node  (){};
\path (-1,1.5) node  (){};
\end{tikzpicture}
\end{center}
\caption{An example of two crossing sets.}
\label{fig:crossingsets}
\end{figure}
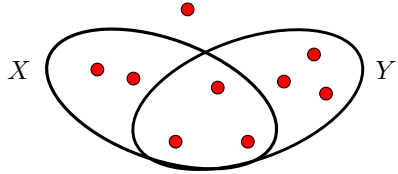

\begin{defn}\label{def:tightsets}
For a $k$-edge-connected multigraph $G=(V,E)$, we say:
\begin{itemize}
\item Any set $S\subseteq V$ such that $|\delta (S)|= k$ (i.e., its boundary is a minimum cut) is a \textbf{tight set}.
\item A set $S \subseteq V$ is \textbf{proper} if $2 \le |S| \le n-2$ and a \textbf{singleton} if $|S|=1$.
\item Two sets $S,S' \subseteq V$ \textbf{cross} if all of $S \smallsetminus S'$,
 $S' \smallsetminus S$, $S \cap S'$, and $V \smallsetminus (S \cup S') \not= \emptyset$ are non-empty (see \cref{fig:crossingsets}).
\end{itemize}
\end{defn}

\begin{figure}[!htbp]
\centering
\begin{tikzpicture}[
            auto,
            node distance = 2.5cm, 
            semithick 
        ]
        \tikzstyle{every state}=[
            draw = black,
            thick,
            fill = white,
            minimum size = 1.5mm
        ]
	\pic at (0,0) {Graph={1}};
	\draw[rotate=-40] [purple,line width=1.2pt, dashed] (1.13,0) ellipse (1.35 and 1.2);
	\foreach \a/\b in {B/3.3, C/6.3, D/9.3} {
	\draw [purple,line width=1.2pt, dashed] (\b+0.1,-1.3) ellipse (1.2 and 1);	}
	\draw [green,line width=1.2pt, dashed] (4.9,-0.8) ellipse (5.9 and 3.6);
	\draw [purple,line width=1.2pt, dashed] (5.3,-0.8) ellipse (6.5 and 3.8);
	\draw [green,line width=1.2pt, dashed] (3.6,-0.85) ellipse (4.4 and 3);
	\draw[rotate=-20] [blue,line width=1.2pt, dashed] (2.3,-0.2) ellipse (2.7 and 1.9);
	\draw [blue,line width=1.2pt, dashed] (5,-1.3) ellipse (2.85 and 1.6);
	\end{tikzpicture}
	    \caption{An example of a half integral instance, where the rightmost vertex is the root. The dotted lines circle (one side of) each minimum cut of the graph. The red cuts are "degree cuts" and the green cuts are the "cycle cuts." The blue cuts cross one another and therefore do not appear in the hierarchy of critical cuts.} \label{fig:ex1}
\end{figure}
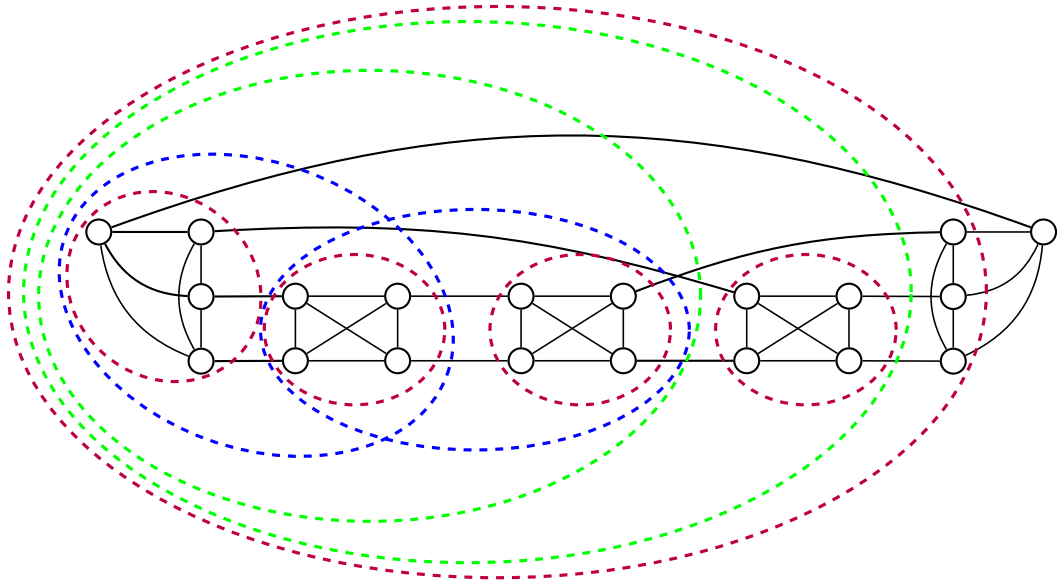



The following are two standard facts about minimum cuts; for proofs see~\cite{FF09}.
\begin{fact}
\label{fact:min-cuts1}
 If two tight sets $S$ and $S'$ cross, then each of $S \smallsetminus S'$,
 $S' \smallsetminus S$, $S \cap S'$ and $\overline{S \cup S'}$ are tight. Moreover, there are no edges from $S \smallsetminus S'$ to
 $S' \smallsetminus S$, and there are no edges from 
 $S \cap S'$ to $\overline{S \cup S'}$.
 \end{fact}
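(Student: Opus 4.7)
The plan is to prove this by the standard submodularity/posimodularity argument for the cut function, paying attention to the two crossing equalities to extract both the tightness of the four derived sets and the absence of edges across the two diagonals. Set $A = S \cap S'$, $B = S \setminus S'$, $C = S' \setminus S$, $D = \overline{S \cup S'}$, all four of which are nonempty since $S$ and $S'$ cross. Let $e(X,Y)$ denote the number of edges of $G$ with one endpoint in $X$ and one in $Y$.

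First, I would expand $|\delta(S)| + |\delta(S')|$ in terms of the six between-block edge counts $e(A,B), e(A,C), e(A,D), e(B,C), e(B,D), e(C,D)$, and observe the identity
\[
|\delta(S)| + |\delta(S')| = |\delta(A)| + |\delta(S \cup S')| + 2\,e(B,C),
\]
together with the dual identity
\[
|\delta(S)| + |\delta(S')| = |\delta(B)| + |\delta(C)| + 2\,e(A,D).
\]
These are the submodular and posimodular inequalities for $|\delta(\cdot)|$ in their sharp form; each is verified by comparing which $e(\cdot,\cdot)$ terms appear on each side.

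Now I would plug in the hypotheses. The left-hand side equals $2k$ since $S$ and $S'$ are tight. On the right-hand side of the first identity, $k$-edge-connectivity gives $|\delta(A)| \ge k$ and $|\delta(S \cup S')| \ge k$ (using that $A$ and $\overline{S \cup S'} = D$ are nonempty proper subsets, so neither cut is degenerate), while $e(B,C) \ge 0$. Hence every inequality is tight, yielding $|\delta(A)| = |\delta(S \cup S')| = k$ and $e(B,C) = 0$. The same reasoning applied to the second identity gives $|\delta(B)| = |\delta(C)| = k$ and $e(A,D) = 0$. Note that $|\delta(S \cup S')| = |\delta(\overline{S \cup S'})| = |\delta(D)|$, so $D$ is tight as well. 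This simultaneously proves that all four of $S \cap S'$, $S \setminus S'$, $S' \setminus S$, $\overline{S \cup S'}$ are tight and establishes the two "no edges" statements.

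The proof is essentially mechanical once the two crossing identities are written down; the only subtlety is to verify the bookkeeping on which $e(X,Y)$ contributes to each cut, and to notice that the diagonal edge counts $e(B,C)$ and $e(A,D)$ are exactly the "defect" terms forced to zero. There is no real obstacle, but the step most worth stating carefully is the invocation of $k$-edge-connectivity to guarantee that the four derived sets are nonempty proper subsets and therefore have cut at least $k$; this is where the crossing hypothesis (all four of $A,B,C,D$ nonempty) is used.
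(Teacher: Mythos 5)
Your proof is correct: both crossing identities are the right expansions of $|\delta(S)|+|\delta(S')|$ into the six block-to-block edge counts, and combining them with $k$-edge-connectivity of the four nonempty blocks forces tightness of all four derived sets and $e(S\smallsetminus S', S'\smallsetminus S)=e(S\cap S', \overline{S\cup S'})=0$. The paper does not prove this fact itself but cites a standard reference, and your submodularity/posimodularity argument is exactly the standard proof being invoked there.
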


 \begin{fact}
 \label{fact:min-cuts2}
Let $G=(V,E)$ be a $k$-regular $k$-edge-connected graph. Suppose either $|V|=3$ or $G$ has at least one proper min cut, and every proper min cut is crossed by some other proper min cut. Then, $k$ is even and $G$ forms a cycle, with $k/2$ parallel edges between each adjacent pair of vertices.
\end{fact}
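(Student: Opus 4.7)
My approach is induction on $|V|$. The base case $|V|=3$ is a direct computation: writing $m_{ab},m_{bc},m_{ac}$ for the three pairwise edge multiplicities, $k$-regularity gives the linear system $m_{ab}+m_{ac}=m_{ab}+m_{bc}=m_{ac}+m_{bc}=k$, whose unique solution is $m_{ab}=m_{bc}=m_{ac}=k/2$, so $k$ is even and $G$ is the desired triangle.

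For the inductive step, with $|V|\geq 4$, I first pick an inclusion-minimal proper min cut $S$. By hypothesis $S$ is crossed by some proper min cut $S'$, and \cref{fact:min-cuts1} yields that $S\cap S'$ and $S\setminus S'$ are tight and strictly contained in $S$; minimality of $S$ forces each to be a singleton, so $|S|=2$. Writing $S=\{u,v\}$, the identity $|\delta(S)|=2k-2\,e(u,v)=k$ gives $e(u,v)=k/2$, and in particular $k$ is even. Taking $u\in S'$, $v\notin S'$, I apply \cref{fact:min-cuts1} once more: $S'\setminus\{u\}$ is tight, and if $a$ denotes the number of edges from $u$ to $V\setminus(S'\cup\{v\})$, a direct degree count gives $|\delta(S'\setminus\{u\})|=k-2a$, so $a=0$. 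A symmetric computation shows $v$ has no edges into $S'\setminus\{u\}$. Hence $u$'s $k/2$ non-$v$ edges all lie inside $S'\setminus\{u\}$, and $v$'s $k/2$ non-$u$ edges all lie inside $V\setminus(S'\cup\{v\})$.

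Next I contract $S$ to a single vertex $w$, producing $G'=G/S$, which is still $k$-regular and $k$-edge-connected on $n-1$ vertices. The key technical step is to show that $G'$ inherits the hypothesis. If $|V(G')|=3$, the base case applies to $G'$. Otherwise, any proper min cut $T'$ of $G'$ lifts to a proper min cut $T$ of $G$ with $\{u,v\}$ on one side; applying the hypothesis on $G$ yields a crossing $T''$. If $T''$ does not separate $u$ from $v$, its shadow in $G'$ is a proper min cut crossing $T'$. If it does separate them, I uncross $T''$ with $S$ to replace it by $T''\cup\{v\}$ or $T''\setminus\{u\}$ (both tight by \cref{fact:min-cuts1}). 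A short case analysis, with the corner situations $|T''|\in\{2,n-2\}$ handled by applying the hypothesis to $V\setminus T$ in place of $T$, shows that at least one of these replacements is a proper min cut of $G'$ crossing $T'$. The inductive hypothesis then gives $G'$ as a cycle with $k/2$ parallel edges, and in particular $w$ has exactly two neighbors $a,b$ in $G'$, each joined to $w$ by $k/2$ parallel edges.

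To un-contract: the $k/2$ edges from $w$ to $a$ in $G'$ correspond to $u$'s $k/2$ non-$v$ edges, all of which lie in $S'\setminus\{u\}$ by the second paragraph, so $a\in S'\setminus\{u\}$ and all $k/2$ of these edges terminate at $a$. Symmetrically $b\in V\setminus(S'\cup\{v\})$ and $v$'s $k/2$ non-$u$ edges all terminate at $b$. Combined with the cycle structure of $G'$, this makes $G$ a cycle $a-u-v-b-\cdots-a$ with $k/2$ parallel edges between each consecutive pair. The main obstacle in this plan is the case analysis used to verify that $G'$ inherits the hypothesis: the uncrossing may degenerate $T''\cup\{v\}$ or $T''\setminus\{u\}$ to a non-proper set, and those corner cases require choosing the other replacement or invoking the hypothesis on an additional cut.
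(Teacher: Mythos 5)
The paper never proves this fact at all---it is dispatched with a citation to \cite{FF09}---so your self-contained induction on $|V|$ is necessarily its own route; and its main steps do check out. In particular, the obstacle you flag in the uncrossing step is not actually a problem. Write $W_1=T''\cup\{v\}$ and $W_2=T''\smallsetminus\{u\}$ (with $u\in T''$, $v\notin T''$). If $\{u,v\}\subseteq T$, then $W_1$ fails to cross $T$ only when $T\smallsetminus T''=\{v\}$ and $W_2$ fails only when $T\cap T''=\{u\}$, so both fail only if $T=\{u,v\}$; if $\{u,v\}\cap T=\emptyset$, the two failure conditions are $\overline{T\cup T''}=\{v\}$ and $T''\smallsetminus T=\{u\}$, which together force $T=V\smallsetminus\{u,v\}$. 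Both degenerate possibilities are excluded because $T$ is the lift of a \emph{proper} cut $T'$ of $G'$. Moreover, once all four regions determined by $T$ and the chosen replacement are nonempty, each side of the contracted images contains two of these (nonempty) regions, so the image is automatically a proper min cut of $G'$ crossing $T'$; no detour through $V\smallsetminus T$ is needed. The uncontraction step is also sound: since $u$'s non-$v$ edges all enter $S'\smallsetminus\{u\}$ and $v$'s non-$u$ edges all enter $V\smallsetminus(S'\cup\{v\})$, the two neighbors of $w$ cannot both lie in the same one of these disjoint sets, so each neighbor receives one full bundle of $k/2$ edges.

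There is, however, one genuine gap: to invoke the inductive hypothesis on $G'$ when $|V(G')|\ge 4$ you must also verify the \emph{existence} clause of the hypothesis, namely that $G'$ has at least one proper min cut. You only verify the crossing clause (``every proper min cut of $G'$ is crossed''), which is vacuously true when $G'$ has no proper min cuts, and in that case the induction yields nothing---and the conclusion genuinely needs the clause, since e.g.\ $K_5$ is $4$-regular, $4$-edge-connected, has no proper min cuts, and is not a cycle of doubled edges. Fortunately the gap closes with objects you already have: by \cref{fact:min-cuts1} applied to the crossing pair $S,S'$, both $S'\cup\{v\}$ and $S'\smallsetminus\{u\}$ are tight and do not separate $u$ from $v$; their images in $G'$ have $|S'|$ and $|S'|-1$ vertices, with complements of sizes $n-1-|S'|$ and $n-|S'|$ respectively, so whenever $n\ge 5$ (i.e.\ $|V(G')|\ge 4$) at least one of the two images is a proper min cut of $G'$. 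With that observation added, your induction is complete.
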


\subsection{Cycle cut instances and the hierarchy of critical tight sets}

We first define our class of instances.

\begin{defn}[Cycle cut instance]\label{def:cyclecut1}
	We say a graph $G$ is a \textbf{cycle cut instance} if every non-singleton tight set $S$ can be written as the union of two tight sets $A,B \not= S$. 
\end{defn}

Note that this definition includes complements of singletons, i.e. sets of size $n-1$. As mentioned in the introduction this condition captures the two known integrality gap examples of the subtour LP. See \cref{fig:ex2} for a cycle cut instance. Due to the following fact, an equivalent definition of a cycle cut instance is that all non-singleton tight sets which are \textit{not crossed} by any other tight set can be written as the union of two tight sets. 
\begin{fact}\label{fact:crossedtight}
	In any graph $G$, every tight set which is crossed by another tight set can be written as the union of two tight sets.
\end{fact}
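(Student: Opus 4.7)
The plan is to apply Fact~\ref{fact:min-cuts1} directly. Let $S$ be a tight set that is crossed by another tight set $S'$. Write
\[
A := S \cap S', \qquad B := S \setminus S',
\]
so that $S = A \cup B$ as a set-theoretic union.

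First I would verify that both $A$ and $B$ are tight. This is immediate from Fact~\ref{fact:min-cuts1}: when two tight sets cross, each of $S \cap S'$, $S \setminus S'$, $S' \setminus S$, and $\overline{S \cup S'}$ is tight. So in particular $|\delta(A)| = |\delta(B)| = k$, i.e., both $A$ and $B$ are tight.

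Next I would check that $A \neq S$ and $B \neq S$, which is required by Definition~\ref{def:cyclecut1}. By the definition of crossing (Definition~\ref{def:tightsets}), all four pieces $S \setminus S'$, $S' \setminus S$, $S \cap S'$, and $V \setminus (S \cup S')$ are non-empty. In particular $B = S \setminus S' \neq \emptyset$, which forces $A = S \cap S' \neq S$, and $A = S \cap S' \neq \emptyset$, which forces $B = S \setminus S' \neq S$. Together with $S = A \cup B$, this completes the proof.

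There is no real obstacle here; the fact is essentially a repackaging of the submodularity of cuts (Fact~\ref{fact:min-cuts1}) combined with the non-emptiness of the four regions in the definition of crossing. The only thing to be careful about is making sure that the two tight sets witnessing the decomposition are proper subsets of $S$, which is exactly what the non-emptiness of $S \setminus S'$ and $S \cap S'$ guarantees.
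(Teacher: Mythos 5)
Your proof is correct and follows exactly the paper's argument: the paper also decomposes $S$ as $(S \cap T) \cup (S \smallsetminus T)$ and cites Fact~\ref{fact:min-cuts1} to conclude both pieces are tight. Your additional check that the two pieces are proper (nonempty and not equal to $S$) is a detail the paper leaves implicit in "the claim follows," but it is the same proof.
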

\begin{proof}
	Let $S$ be a tight set crossed by $T$. Then by \cref{fact:min-cuts1}, $S \smallsetminus T$ and $S \cap T$ are tight sets; the claim follows.
\end{proof}
Thus, the condition in \cref{def:cyclecut1} is met trivially by all crossed sets. So, it is enough to ensure all sets that are not crossed can be written as the union of two tight sets.

We now show a third equivalent definition of cycle cut instances matching what is described in the introduction. First, fix an arbitrary \textbf{root vertex} $r \in V$, and for all cuts we consider we will take the side which does not contain $r$.




	
	

\begin{defn}[Critical cuts]
	A {\em critical} cut is any tight set $S \subseteq V \smallsetminus \{r\}$ which does not cross any other tight set. 
\end{defn}


\begin{defn}[Hierarchy of critical cuts, $\mathcal{H}$]
	Let $\mathcal{H} \subseteq 2^{V \smallsetminus r}$ be the set of all critical cuts.  
\end{defn}

For an example, see \cref{fig:ex1}. The hierarchy naturally gives rise to a parent-child relationship between sets as follows:
	
\begin{defn}[Child, parent, $E^\rightarrow(S)$]
	Let $S \in {\cal H}$ such that $|S| \ge 2$. Call the maximal sets $C \in {\cal H}$ for which $C \subset S$ the children of $S$, and call $S$ their parent.
	Finally, define $E^\rightarrow(S)$ to be the set of edges with endpoints in two different children of $S$. 
\end{defn}


	

\begin{defn}[Cycle cut, degree cut]\label{def:cyclecut}
	Let $S \in {\cal H}$ with $|S|\ge2$. Then we call $S$ a \textbf{cycle cut} if when $G \smallsetminus S$ and all of the children of $S$ are contracted, the resulting graph forms a cycle of length at least three with two parallel edges between each adjacent node. Otherwise, we call it a \textbf{degree cut}.
\end{defn} 

While this definition of a cycle cut may sound specialized, due to \cref{fact:min-cuts2}, cycle cuts arise very naturally from collections of crossing min cuts. See \cref{fig:ex1} for a general example whose hierarchy of critical tight sets contains both degree cuts and cycle cuts. 

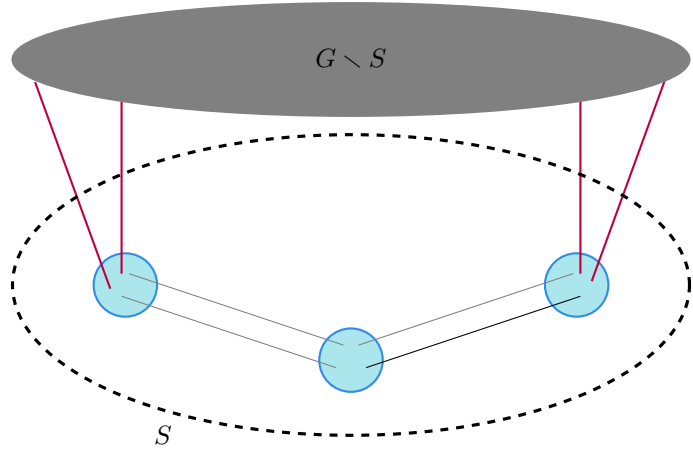
\begin{figure}[!htb]
\centering
\begin{tikzpicture}
\foreach \a/\b in {-3/1,0/0,3/1}{\draw[bleudefrance,fill=blizzardblue,thick] (\a,\b) circle (12pt);}

\draw[gray,xshift=-0.05cm,yshift=-0.15cm] (-3,1) to (-0.15,0.05); 
\draw[gray,xshift=0.05cm,yshift=--0.15cm] (-3,1) to (-0.15,0.05);

\draw[gray,xshift=-0.05cm,yshift=0.15cm] (0.15,0.05) to node [above] {} (3,1); 
\path[-] (0.15,0.05) edge[below,midway,,xshift=0.05cm,yshift=-0.15cm] node {} (3,1);

\path[-] (-3,1) edge[below,midway,thick,purple,xshift=-0.2cm,yshift=-0.05cm] node {} (-4,3.75);
\path[-] (-3,1) edge[right,midway,thick,purple,xshift=-0.05cm,yshift=0.15cm] node {} (-3,3.75);

\path[-] (3,1) edge[below,midway,thick,purple,xshift=0.2cm,yshift=0.05cm] node {} (4,3.75);
\path[-] (3,1) edge[right,midway,thick,purple,xshift=0.05cm,yshift=0.15cm] node {} (3,3.75);

\draw [black,line width=1.2pt, dashed] (0,1) ellipse (4.5 and 2);
\draw[gray,fill=gray,thick] (0,4) ellipse (4.5 and 0.75);

\node [draw=none] at (-2.5,-1) () {$S$};
\node [draw=none] at (0,4) () {$G \smallsetminus S$};

\end{tikzpicture}
\caption{$S$ is an example of a cycle cut with three children. In blue are contracted critical tight sets. In gray is the rest of the graph with $S$ contracted. As in \cref{fact:min-cuts2}, we can see that when $G \smallsetminus S$ is contracted into a single vertex, the resulting graph is a cycle with 2 edges between each adjacent vertex. In our recursive proof of our main theorem in \cref{sec:main}, we are given a distribution of Eulerian tours over $G/S$, so in particular on the red edges here, and will then extend it to $G$ with the blue critical sets contracted it by picking a distribution over the black edges.}
\label{fig:cyclecut}
\end{figure}


\begin{fact}
If $G$ is a cycle cut instance, then for any choice of $r$, ${\cal H}$ is composed only of cycle cuts (and singletons).
\end{fact}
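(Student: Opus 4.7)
The plan is to show that for every non-singleton $S \in \mathcal{H}$, the contracted graph $H_S$ obtained from $G$ by contracting $V \smallsetminus S$ to a single vertex $v^*$ and each child of $S$ to a single vertex is a cycle with $k/2 = 2$ parallel edges between adjacent vertices. Contracting tight sets preserves $k$-regularity and $k$-edge-connectivity, so $H_S$ is $k$-regular and $k$-edge-connected; by \cref{fact:min-cuts2} it suffices to verify either $|V(H_S)| = 3$, which is immediate when $S$ has only two children, or that $H_S$ has at least one proper min cut and every proper min cut of $H_S$ is crossed by another.

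For the existence of a proper min cut when $S$ has $m \ge 3$ children, I would apply the cycle cut instance hypothesis to $S$ itself, writing $S = A \cup B$ with $A, B$ tight and $A, B \ne S$. A short uncrossing argument, using that $S$ and each child $C_i$ are critical, shows that each of $A, B$ is a union of children of $S$: the only alternative is $A \subseteq C_i$ for some $i$, and then $A \cup B = S$ together with $B \ne S$ forces $A = C_i$. Since $A \cup B$ covers all $m$ children while $A, B \subsetneq S$, at least one of the corresponding subsets of children has size in $[2, m-1]$, yielding a proper min cut in $H_S$.

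For the crossing property, let $U$ be a proper min cut in $H_S$, WLOG with $v^* \notin U$, and set $T = \bigcup_{i \in U} C_i$, a tight set containing at least two children. The key sub-claim is that $T$ is not critical: if $T \in \mathcal{H}$, then maximality of the children forces $T$ to lie inside a single child of $S$, contradicting that $T$ contains multiple pairwise disjoint children. Hence some tight set $T'$ crosses $T$ in $G$, and uncrossing against $S$ yields $T' \subseteq S$ or $V \smallsetminus T' \subseteq S$. In the second case I pass to $V \smallsetminus T'$, which is still tight and still crosses $T$ since crossing is preserved under complementation, so in either case I obtain $T'' \subsetneq S$ tight that crosses $T$. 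Each child $C_i$ is critical, so $T''$ cannot cross $C_i$, and the possibility $T'' \subseteq C_i$ is ruled out by the crossing with $T$; thus $T''$ is a union of children and corresponds to a proper min cut $U''$ in $H_S$. The four crossing conditions for $T$ and $T''$ in $G$ then translate directly into those for $U$ and $U''$ in $H_S$, with $v^* \in V(H_S) \smallsetminus (U \cup U'')$ witnessing the fourth nonempty set.

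The main obstacle is the case $V \smallsetminus T' \subseteq S$ above, which is handled by passing to the complement and using that crossing is preserved by complementation on either side; without this step one might falsely conclude that the crossing witness always lies in $S$. A secondary subtlety is the ``$T$ is not critical'' argument, which quietly uses that singletons in $V \smallsetminus \{r\}$ are always critical (so the children of $S$ partition $S$) together with the maximality clause in the definition of children. With both the existence of a proper min cut and the crossing property established, \cref{fact:min-cuts2} delivers the desired cycle structure on $H_S$, i.e., $S$ is a cycle cut.
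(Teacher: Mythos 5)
Your proof is correct and follows essentially the same route as the paper's: contract $G \smallsetminus S$ and the children of $S$, then verify the hypotheses of \cref{fact:min-cuts2} using the decomposition $S = A \cup B$ from the cycle cut instance condition together with the criticality/maximality of the children. The only real difference is organizational and in level of detail---you split on the number of children rather than on whether the contracted graph has a proper tight set, and you explicitly carry out the uncrossing (against $S$ and against the children) showing that any tight set crossing a union of children can be replaced by a crossing tight set that is itself a union of children, a step the paper compresses into the assertion that, by definition of $\mathcal{H}$, the contracted graph contains no uncrossed proper tight sets.
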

\begin{proof}
    Let ${\cal H}$ be the hierarchy of critical cuts for an arbitrary choice of $r$ and let $S \in {\cal H}$ with $|S|\ge 2$ (note ${\cal H} \not= \emptyset$ since $V \smallsetminus \{r\} \in {\cal H}$). We will show that $S$ is a cycle cut. 
    
    Contract $G \smallsetminus S$ and all of the children of $S$; call the resulting graph $G'$. By definition of ${\cal H}$, $G'$ contains no proper tight sets which are not crossed. If $S$ contains a proper tight set which is crossed, then by \cref{fact:min-cuts2}, it is a cycle cut and we are done. 
    
    Otherwise, $G'$ contains no proper tight sets. Since  $G$ is a cycle cut instance and $S$ is a tight set, there exist tight sets $A,B$ such that $A \cup B = S$. Since the contracted children of $S$ are not crossed and there are no proper tight sets in $G'$, $A$ and $B$ must be vertices of this graph. Therefore, $G'$ must have exactly three vertices. Since it is 4-regular, it is a cycle of length three, and thus is a cycle cut.
    \end{proof}

Thus, in the remainder of the paper, we assume ${\cal H}$ is a collection of cycle cuts. We remark that the following is also true:
\begin{fact}
	If for some choice of $r$, ${\cal H}$ is composed only of cycle cuts, then $G$ is a cycle cut instance.
\end{fact}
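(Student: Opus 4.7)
The plan is to establish, by case analysis on a non-singleton tight set $S$, that $S$ decomposes as $A \cup B$ with $A, B$ tight and distinct from $S$; by \cref{def:cyclecut1} this is exactly what it means for $G$ to be a cycle cut instance. The easy preliminary is that whenever $S$ is crossed by some other tight set, \cref{fact:crossedtight} already produces such a decomposition, so I can assume throughout that $S$ is uncrossed. The definition of critical cuts then leaves two sub-cases: $S \in \mathcal{H}$ (when $r \notin S$), or $\bar S := V\setminus S \in \mathcal{H}$ (when $r \in S$).

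If $S \in \mathcal{H}$ with $|S|\ge 2$, the hypothesis that $\mathcal{H}$ consists only of cycle cuts and singletons forces $S$ to be a cycle cut. Letting $c_1, \ldots, c_k$ (with $k \ge 2$) denote its children, the contracted graph on $\{G \setminus S, c_1, \ldots, c_k\}$ is a cycle with doubled edges, and I would simply take $A = c_1$ and $B = c_2 \cup \cdots \cup c_k$: both are arcs of the cycle and hence tight in $G$, their union is $S$, and each misses a nonempty child of $S$.

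The main obstacle is the case $r \in S$. Now $\bar S \in \mathcal{H}$, but if $\bar S$ itself is a cycle cut its cycle structure lives \emph{inside} $\bar S$ and gives tight sets disjoint from $S$ rather than contained in $S$, so it cannot be used directly. My plan is to climb one level up in the hierarchy: since $|S|\ge 2$ forces $\bar S \ne V \setminus \{r\}$, the set $\bar S$ has a parent $P \in \mathcal{H}$, which must be a non-singleton and hence a cycle cut. Write $P$'s cycle as $p_0 - c_1 - \cdots - c_l - p_0$ with $p_0 = G \setminus P$ and $\bar S = c_i$. In the generic case $1 < i < l$, I would take the two tight arcs of $P$'s cycle that pass through $p_0$ and avoid $c_i$, namely $A = (V \setminus P) \cup c_1 \cup \cdots \cup c_{i-1}$ and $B = (V \setminus P) \cup c_{i+1} \cup \cdots \cup c_l$; their union is $V \setminus c_i = S$, and each is a proper subset of $S$ because it omits a nonempty child of $P$.

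The last subtlety is when $i \in \{1, l\}$, i.e.\ $\bar S$ is adjacent to $p_0$ in $P$'s cycle; one of the two arcs above then degenerates and equals $S$. In this boundary case I would instead pick $A = V \setminus P$, which is tight because $P$ is, together with the single tight arc $B = P \setminus \bar S$, and verify $A \cup B = V \setminus \bar S = S$ with $A \ne S$ (since $l \ge 2$ gives $P \ne \bar S$) and $B \ne S$ (since $r \in S \setminus B$). The heart of the argument is the observation that when $r \in S$ one must use the cycle structure of $\bar S$'s \emph{parent}, not of $\bar S$ itself, in order to produce tight subsets of $S$.
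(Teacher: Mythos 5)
Your proof is correct and follows essentially the same route as the paper's: dispose of crossed sets via the fact on crossed tight sets, decompose an uncrossed $S$ with $r \notin S$ using the cycle of its own children, and when $r \in S$ pass to the parent of $\overline{S}$ in $\mathcal{H}$ and use tight arcs of that parent's cycle. The only difference is cosmetic: the paper uses the single uniform decomposition $S = \left(V \smallsetminus (a_{i-1} \cup a_i)\right) \cup a_{i-1}$, which handles the position of $\overline{S}$ in the chain uniformly and so avoids your separate boundary case for $i \in \{1,l\}$.
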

\begin{proof}
Fix any tight set $S$ with $|S|\ge 2$. We will show it can be written as the union of two tight sets not equal to $S$. If it is crossed by another tight set, by \cref{fact:crossedtight} we are done. If $r \not\in S$, then $S$ appears in ${\cal H}$. Then, the graph in which the children of $S$ and $G \smallsetminus S$ are contracted is a cycle with vertices $a_0,\dots,a_k$. Then $S$ is the union of the vertices contained in tight sets $\{a_1,\dots,a_{k-1}\}$ and $\{a_k\}$.

Otherwise $r \in S$, and $\overline{S} \in {\cal H}$. Furthermore, since $|S|\ge 2$, $\overline{S} \not= S \smallsetminus \{r\}$. Thus, $\overline{S}$ has a parent $S'$. Consider the graph $a_0=G\smallsetminus S',a_1,\dots,a_k$ induced by contracting the children of $S'$ and $G \smallsetminus S'$. We have $\overline{S}=a_i$ for some $i \not= 0$. Then, $S$ can be written as $(V \smallsetminus \{a_i,a_{i-1}\}) \cup \{a_{i-1}\}$ as desired. 
\end{proof}

\begin{figure}
\centering
\begin{tikzpicture}
        \tikzstyle{every state}=[
            thick,fill = white,minimum size = 1.5mm
        ]
	\pic at (0,0) {intgapgraph};
	\draw [blue,line width=1.2pt, dashed] (5.6,-3.45) ellipse (6.9 and 2.3);
	\draw [blue,line width=1.2pt, dashed] (5.6,-2.25) ellipse (5.1 and 0.8);
	\draw [blue,line width=1.2pt, dashed] (5.6,-4.35) ellipse (6 and 0.8);
		\draw [blue,line width=1.2pt, dashed] (5.6,-4.35) ellipse (6 and 0.8);
	\draw [blue,line width=1.2pt, dashed] (5.6,-2.35) ellipse (8.5 and 3.4);
	\end{tikzpicture}
	    \caption{An example of a cycle cut instance, which is also the canonical integrality gap example for the Subtour LP. The non-singleton critical cycle cuts are shown in blue. The topmost vertex is the root.
	    } \label{fig:ex2}
\end{figure}
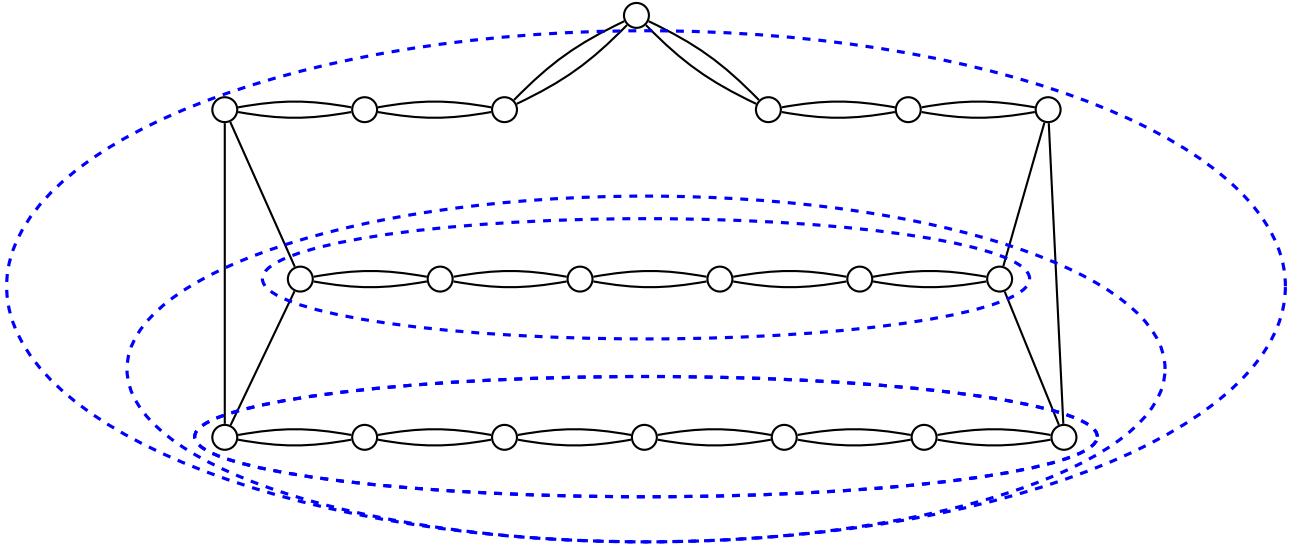

\subsection{Structure of cycle cuts}

 Given $S \in {\cal H}$, let $a_0 = G \smallsetminus S$ and let $a_1,\dots,a_k$ be its children in ${\cal H}$ (which are either vertices or cycle cuts). By \cref{fact:min-cuts2} $a_0,\dots,a_k$ can be arranged into a cycle such that two edges go between each adjacent vertex. WLOG let $a_1,\dots,a_k$ be in counterclockwise order starting from $a_0$. We call $a_1$ the leftmost child of $S$ and $a_k$ the rightmost child.
 

\begin{defn}[External and internal cycles cuts]
Let $S \in {\cal H}$ be a cut with parent $S'$. We call $S$ \textit{external} if in the ordering $a_0,\dots,a_k$ of $S'$ (as given above), $S=a_1$ or $S=a_k$. Otherwise, call $S$ \textit{internal}.
\end{defn}

For example, if the blue nodes in \cref{fig:cyclecut} are contracted cycle cuts, the left and right nodes are external, while the middle one is internal. Note that for an cycle cut $S$ with parent $S'$, if $S$ is external then $|\delta(S) \cap \delta(S')| = 2$, and if $S$ is internal then $|\delta(S) \cap \delta(S')| = 0$.

Using the following simple fact, we will now describe our convention for drawing and describing cycle cuts:
 
 \begin{fact}\label{fact:share<=1}
Let $A,B,C \in {\cal H}$ be three distinct critical cuts such that $A \subsetneq B$ and $B \cap C = \emptyset$ or $B \subseteq C$. Then $|\delta(A) \cap \delta(C)| \le 1$. 
 \end{fact}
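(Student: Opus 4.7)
The plan is to reduce the claim to a statement about $A$'s parent in the hierarchy and then derive a contradiction using the defining property that critical cuts do not cross any tight set. Let $P \in \mathcal{H}$ denote the parent of $A$, which exists since $A \subsetneq B$ with $B \in \mathcal{H}$. The first observation is that $\delta(A) \cap \delta(C) \subseteq \delta(A) \cap \delta(P)$ in both branches of the hypothesis: if $B \cap C = \emptyset$, any edge from $A$ to $C$ must exit $P \subseteq B$; if $B \subseteq C$, any edge from $A$ to $V \smallsetminus C$ must exit $P \subseteq B \subseteq C$. Hence it suffices to control the edges of $\delta(A)$ that leave $P$.

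If $A$ is internal in $P$'s cycle cut then $|\delta(A) \cap \delta(P)| = 0$ and the claim is immediate. Otherwise $A$ is external, and exactly two edges of $\delta(A)$ lie in $\delta(P)$; I need to show that at most one of these is in $\delta(C)$. Suppose for contradiction that both of them are. Define $Y := C$ in the first case of the hypothesis and $Y := V \smallsetminus C$ in the second. Then $Y$ is tight (since $C$ is tight), $Y$ is disjoint from $P$ (because $P \subseteq B$ is either disjoint from $C$ or contained in $C$), and by assumption the two edges of $\delta(A) \cap \delta(P)$ both have their other endpoint in $Y$.

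The main step is to show that $X := A \cup Y$ is itself a tight set that crosses $P$, contradicting criticality of $P$. For tightness, count edges leaving $X$: of the four edges of $\delta(A)$, two land in $Y \subseteq X$ (by assumption) and the remaining two land in $P \smallsetminus A \subseteq V \smallsetminus X$; symmetrically, of the four edges of $\delta(Y)$, two come from $A \subseteq X$ and the remaining two exit $X$. Since $A$ and $Y$ are disjoint, $|\delta(X)| = 2 + 2 = 4$. For crossing: $X \cap P = A$ and $P \smallsetminus X = P \smallsetminus A$ are nonempty because $A$ is a proper child of $P$; $X \smallsetminus P = Y$ is nonempty by construction; and $V \smallsetminus (X \cup P)$ is nonempty because it contains the root $r$ in the first case (using $\mathcal{H} \subseteq 2^{V \smallsetminus \{r\}}$, so $r \notin P \cup C$) and equals $C \smallsetminus P \supseteq C \smallsetminus B \ne \emptyset$ in the second case since $B \subsetneq C$.

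The main obstacle is spotting the unified ``far side'' choice $Y$ that allows the same tight-set construction $X = A \cup Y$ to close both cases simultaneously. Once $Y$ is chosen so that it is tight, disjoint from $P$, and receives both external edges of $A$, tightness of $X$ and its crossing of $P$ are routine edge-count and set-containment checks, and the contradiction with $P$ being critical finishes the argument.
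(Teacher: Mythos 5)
Your proof is correct, and its engine is the same uncrossing/edge-counting idea as the paper's: assume two shared edges and produce a tight set that crosses a critical cut. The routing differs, though. The paper contradicts the criticality of $B$ directly and never touches the hierarchy's cycle structure: when $B \cap C = \emptyset$ it shows $A \cup C$ is tight and crosses $B$, and when $B \subseteq C$ it shows $C \smallsetminus A$ is tight and crosses $B$. You instead pass to $A$'s parent $P$, invoke the cycle-cut structure of $P$ (internal children share no edges with $\delta(P)$, external children share exactly two) to reduce to the two edges of $\delta(A) \cap \delta(P)$, and then contradict the criticality of $P$ via the tight set $A \cup Y$ -- which in the first case is exactly the paper's $A \cup C$, and in the second is the complement of the paper's $C \smallsetminus A$, so the constructed cuts coincide up to complementation. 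Your route buys a single unified construction for both cases and a clean picture (the two external edges of $A$ cannot both point to the same far side $Y$); its cost is that it leans on the standing assumption that every non-singleton set in ${\cal H}$ is a cycle cut, plus the (unproved but standard) laminarity step $P \subseteq B$, whereas the paper's argument uses only tightness and 4-edge-connectivity and hence works for critical cuts without any cycle-cut hypothesis. Also note the internal/external reduction is not actually needed: if three or more edges of $\delta(A)$ landed in $Y$, the same count would force $|\delta(A \cup Y)| < 4$, contradicting 4-edge-connectivity, so your construction alone handles every case.
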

\begin{proof}
Suppose otherwise, and $A$ shares two edges with $C$.

First, suppose $B \cap C = \emptyset$. Then, $A \cup C$ is a minimum cut, contradicting that $B$ was a critical cut since it is crossed by $A \cup C$. Note $A \cup C$ crosses $B$ since i) $B \smallsetminus (A \cup C) \not=\emptyset$ since $A \subsetneq B$, ii) $(A \cup C) \smallsetminus B \not=\emptyset$ since $C \cap B = \emptyset$, iii) $(A \cup C) \cap B \not= \emptyset$ since $A \subseteq B$ and finally iv) $(A \cup C) \cup B \not= V$ since neither contains the root. 

Otherwise, suppose $B \subseteq C$. But then $B \smallsetminus A$ has two edges to $A$ and two edges to $C \smallsetminus B$, implying that $C \smallsetminus A$ was a minimum cut. This contradicts that $B$ is a critical tight set, as it is crossed by $C \smallsetminus A$. We can verify $B$ crosses $C \smallsetminus A$ as follows: i) $B \smallsetminus (C \smallsetminus A) \not= \emptyset$ as it contains $A \not= \emptyset$, ii) $(C \smallsetminus A) \smallsetminus B \not= \emptyset$ as $B \subsetneq C$ and $A \subseteq B$, iii) $(C \smallsetminus A) \cap B \not= \emptyset$ since $B \smallsetminus A \not= \emptyset$, and finally iv) the union is not everything since neither contains the root. 
\end{proof}
 
\begin{defn}[$\delta^L(S), \delta^R(S)$]
	Let $S \in {\cal H}$ be a cycle cut. We will define a partition of $\delta(S)$ into two sets $\delta^L(S),\delta^R(S)$ each consisting of two edges.
	
	If $S \not= V \smallsetminus \{r\}$,
	then it has a parent $S'$. $S'$ has children $a_1,\dots,a_k$ such that $S=a_i$ for $i \not= 0$. 
	Let $\delta^L(S) = \delta(S) \cap \delta(a_{i-1})$ and $\delta^R(S) = \delta(a_{i+1 \pmod{k+1}}) \cap \delta(S)$. In other words, we partition the edges of $S$ into the two edges going to the left neighbor of $S$ in the cycle defined by $S'$'s children and the two edges going to the right neighbor.
	
	Otherwise $S = V \smallsetminus \{r\}$. Then if $a_1,\dots,a_k$ are the children of $S$, let $\delta^L(S)$ consist of an arbitrary edge from $\delta(S) \cap \delta(a_1)$ and an arbitrary edge from $\delta(S) \cap \delta(a_k)$. Let $\delta^R(S) = \delta(S) \smallsetminus \delta^L(S)$. 
	

\end{defn}

By \cref{fact:share<=1} and the definition of $\delta^L(S),\delta^R(S)$ for $S=V \smallsetminus \{r\}$, if $S'$ is an external child of a cycle cut $S$, then $|\delta^L(S') \cap \delta(S)| = |\delta^R(S') \cap \delta(S)| = 1$. 
This allows us to adopt the following convention for drawing cycle cuts which we will call the \textbf{caterpillar drawing} of $S$: for an example, see \cref{fig:caterpillar}. Formally, let $S \in {\cal H}$ be a cycle cut with children $a_1,\dots,a_k \in {\cal H}$. Arrange $a_1,\dots,a_k$ in a horizontal line. First, expand the children of $a_1$ vertically (if it is not a singleton) such that the unique edge in $\delta^L(S) \cap \delta(a_1)$ is pointing up (if it is a singleton, simply draw this edge pointing up. Then, expand $a_2,\dots,a_k$ one by one into their respective children (if they exist), placing the children vertically in increasing or decreasing order of their index so that the edges from $a_i$ to $a_{i+1}$ do not cross. If $a_k$ is a singleton, arbitrarily choose which edge to draw pointing up. Otherwise, let $a'$ be the topmost child of $a_k$. Draw the unique edge in $\delta(S) \cap \delta(a')$ pointing up.

There are two distinct types of cycle cuts:

\begin{defn}[Straight and twisted cycle cuts]\label{def:intersectingcyclecut}
    Let $S \in {\cal H}$ be a cycle cut. caterpillar drawing of $S$. If $\delta^L(S)$ has both edges pointing up, then call it a \textit{straight} cycle cut. Otherwise, call it a \textit{twisted} cycle cut. See \cref{fig:caterpillar} for examples. 
\end{defn}

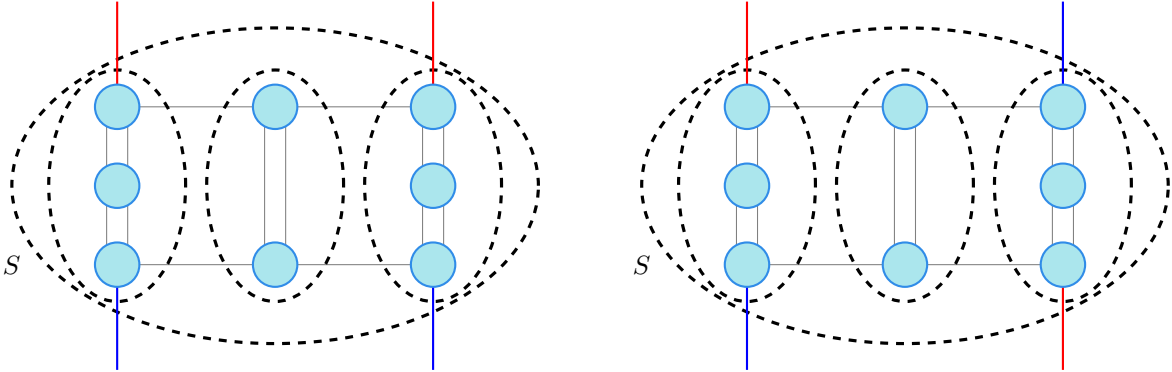
\begin{figure}[!htbp]
\centering
\begin{tikzpicture}[scale=0.7]

\foreach \a/\b in {-3/-1.5,-3/0.1, 3/-1.5,3/0.1}{
\draw[gray,xshift=-0.2cm] (\a,\b) to (\a,\b+1.4);
\draw[gray,xshift=0.2cm] (\a,\b) to (\a,\b+1.4);
}
\draw[gray,xshift=-0.2cm] (0,-1.5) to (0,1.5);
\draw[gray,xshift=0.2cm] (0,-1.5) to (0,1.5);

\draw[gray] (-3,1.5) to (-0.4,1.5);
\draw[gray] (-3,-1.5) to (-0.4,-1.5);

\draw[gray] (0,1.5) to (3,1.5);
\draw[gray] (0,-1.5) to (3,-1.5);

\draw [black,line width=1.2pt, dashed] (-3,0) ellipse (1.3 and 2.2);
\draw [black,line width=1.2pt, dashed] (0,0) ellipse (1.3 and 2.2);
\draw [black,line width=1.2pt, dashed] (3,0) ellipse (1.3 and 2.2);

\draw [black,line width=1.2pt, dashed] (0,0) ellipse (5 and 3);

\node [draw=none] at (-5,-1.5) () {$S$};

\path[-] (-3,1.5) edge[below,midway,thick,red] node {} (-3,3.5);
\path[-] (3,1.5) edge[below,midway,thick,red] node {} (3,3.5);

\path[-] (-3,-1.5) edge[below,left,thick,blue] node {} (-3,-3.5);
\path[-] (3,-1.5) edge[below,right,thick,blue] node {} (3,-3.5);

\foreach \a/\b in {-3/-1.5,-3/0,-3/1.5, 0/-1.5,0/1.5, 3/-1.5,3/0,3/1.5}{
\draw[bleudefrance,fill=blizzardblue,thick] (\a,\b) circle (12pt); 
}

\end{tikzpicture}\hspace{10mm}
\begin{tikzpicture}[scale=0.7]

\foreach \a/\b in {-3/-1.5,-3/0.1, 3/-1.5,3/0.1}{
\draw[gray,xshift=-0.2cm] (\a,\b) to (\a,\b+1.4);
\draw[gray,xshift=0.2cm] (\a,\b) to (\a,\b+1.4);
}
\draw[gray,xshift=-0.2cm] (0,-1.5) to (0,1.5);
\draw[gray,xshift=0.2cm] (0,-1.5) to (0,1.5);

\draw[gray] (-3,1.5) to (-0.4,1.5);
\draw[gray] (-3,-1.5) to (-0.4,-1.5);

\draw[gray] (0,1.5) to (3,1.5);
\draw[gray] (0,-1.5) to (3,-1.5);

\draw [black,line width=1.2pt, dashed] (-3,0) ellipse (1.3 and 2.2);
\draw [black,line width=1.2pt, dashed] (0,0) ellipse (1.3 and 2.2);
\draw [black,line width=1.2pt, dashed] (3,0) ellipse (1.3 and 2.2);

\draw [black,line width=1.2pt, dashed] (0,0) ellipse (5 and 3);

\node [draw=none] at (-5,-1.5) () {$S$};

\path[-] (-3,1.5) edge[below,midway,thick,red] node {} (-3,3.5);
\path[-] (3,1.5) edge[below,midway,thick,blue] node {} (3,3.5);

\path[-] (-3,-1.5) edge[below,left,thick,blue] node {} (-3,-3.5);
\path[-] (3,-1.5) edge[below,right,thick,red] node {} (3,-3.5);

\foreach \a/\b in {-3/-1.5,-3/0,-3/1.5, 0/-1.5,0/1.5, 3/-1.5,3/0,3/1.5}{
\draw[bleudefrance,fill=blizzardblue,thick] (\a,\b) circle (12pt); 
}
\end{tikzpicture}
\caption{Two caterpillar drawing of two different cycle cuts $S$ with three children. The red edges are in the $\delta^L(S)$ partition, and the blue edges are in the $\delta^R(S)$ partition. The left drawing is a straight cycle cut, and the right is a twisted cycle cut as per  \cref{def:intersectingcyclecut}.}
\label{fig:caterpillar}
\end{figure}

In future sections, we abbreviate the caterpillar drawing by contracting the non-singleton children of $S$. We do so partially for cleaner pictures but also to emphasize that all the relevant information used by our construction in the following section is contained in the abbreviated pictures.

\begin{figure}[!htbp]
\centering
\begin{tikzpicture}[scale=0.7]
\draw[gray] (-3,0.6) to (0,0.6);
\draw[gray] (-3,-0.6) to (0,-0.6);
\draw[gray] (0,0.6) to (3,0.6);
\draw[gray] (0,-0.6) to (3,-0.6);
\draw [black,line width=1.2pt, dashed] (0,0) ellipse (4.2 and 1.7);
\node [draw=none] at (-4,-1) () {$S$};
\path[-] (-3,0.5) edge[below,midway,thick,red] node {} (-3,2);
\path[-] (3,0.5) edge[below,midway,thick,red] node {} (3,2);
\path[-] (-3,-0.5) edge[below,left,thick,blue] node {} (-3,-2);
\path[-] (3,-0.5) edge[below,right,thick,blue] node {} (3,-2);
\foreach \a/\b in {-3/0,0/0,3/0}{
\draw[bleudefrance,fill=blizzardblue,thick] (\a,\b) circle (24pt); 
}
\end{tikzpicture}\hspace{10mm}
\begin{tikzpicture}[scale=0.7]
\draw[gray] (-2.4,0.6) to (-0.6,0.6);
\draw[gray] (-2.4,-0.6) to (-0.6,-0.6);
\draw[gray] (0.6,0.6) to (2.4,0.6);
\draw[gray] (0.6,-0.6) to (2.4,-0.6);
\draw [black,line width=1.2pt, dashed] (0,0) ellipse (4.2 and 1.7);
\node [draw=none] at (-4,-1) () {$S$};

\path[-] (-3,0.5) edge[below,midway,thick,red] node {} (-3,2);
\path[-] (3,0.5) edge[below,midway,thick,blue] node {} (3,2);
\path[-] (-3,-0.5) edge[below,left,thick,blue] node {} (-3,-2);
\path[-] (3,-0.5) edge[below,right,thick,red] node {} (3,-2);
\foreach \a/\b in {-3/0,0/0,3/0}{
\draw[bleudefrance,fill=blizzardblue,thick] (\a,\b) circle (24pt); 
}
\end{tikzpicture}
\caption{On the left is a shorthand caterpillar drawing for the straight cycle cut on the left in \cref{fig:caterpillar}, the style of picture we will use in future sections. To obtain this picture we contract the children of the cut. These are the shorthand drawings of \cref{fig:caterpillar}, so the left is a straight cycle cut and the right is twisted. We also label the edges as described below.}
\label{fig:short-caterpillar}
\end{figure}
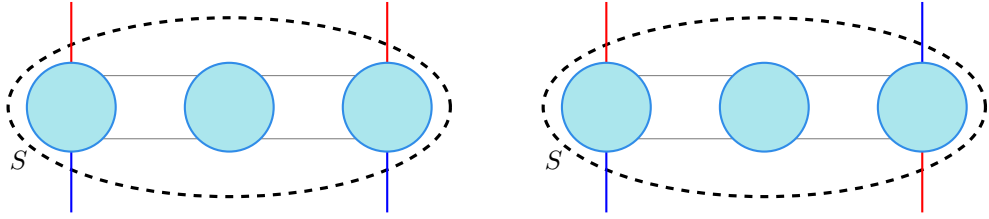

To help the reader's understanding, we suggest looking at \cref{fig:ex2}. The largest critical cut is arbitrarily chosen to be straight or twisted. The second largest critical cut is a straight cut. The smallest non-singleton critical cuts are arbitrarily chosen to be straight or twisted. 



In the following, we will need to distinguish between the straight and twisted types of cuts as well as those with an even versus an odd number of children.

\section{Proof of \Cref{thm:main}}\label{sec:main}

We now present the proof of our main result, a $\frac43$-approximation for half-integral  cycle-cut instances of the TSP.
To prove Theorem \ref{thm:main}, we construct a distribution of Eulerian tours such that every edge is used at most $\frac{2}{3}$ of the time. Since $x_e = \frac{1}{2}$ for every edge in the graph, this immediately implies that when we sample a tour from this distribution, its expected cost is at most $\frac{4}{3}$ times the value of the LP. We work on the cycle cut hierarchy from the top down, and inductively specify the distribution of edges that enter every cut. 

\Cref{fig:short-caterpillar} depicts our convention for visualizing a cycle cut as described in \cref{sec:prelim}. We say that a cycle cut is \emph{even} if it contains an even number of children, and \emph{odd} otherwise. 
\Cref{fig:patterns} illustrates the \emph{patterns} we use, where "pattern" refers to a multiset of edges that enter a cycle cut. For each pattern entering a parent cycle cut, we give (randomized) rules which describe how to connect up its children -- this induces a distribution of patterns for each child.
We represent this process using a Markov chain with 4 states, illustrated in \Cref{fig:patterns}. The figure shows the mapping from patterns to states; the transitions will come from the rules for connecting up the children, which we describe later. In the figure, each state contains two pictures, which represent the \emph{parity} of the edges in the patterns that are mapped to the state. Specifically, an edge that is present is used exactly once, whereas an edge that is not present may be either unused or doubled.
For example, \Cref{fig:doubled} illustrates all possible patterns that are captured by the top picture of state 1. Finally, we maintain the invariant that if a cycle cut is in a given state, then each of the two pictures are equally likely. (When we later give the rules for connecting up the children, we will ensure this invariant is preserved.) Thus, when we say a cycle cut is in a given state with probability $p$, this means the parity of the pattern entering it follows the top picture in the state with probability $\frac{p}{2}$, and the bottom picture with probability $\frac{p}{2}$. We will use the phrase "the distribution of patterns entering a cycle cut $C$ is $(p_1, p_2, p_3, p_4)$" to mean that for all $i \in \{1,2,3,4\}$, $C$ is in state $i$ with probability $p_i$. 

To prove our main result, we give a \emph{feasible region} $R$ of distributions over the states of the Markov chain,
and show that: 1)  If the distribution of patterns entering a cycle cut $C$ belongs to $R$, there is a way to connect up the children of $C$ such that the distribution on each child also belongs to $R$, and 2) for each $\pv \in R$, the corresponding rule for connecting the children of $C$ uses each edge in $E^{\rightarrow}(C)$ at most $\frac23 = \frac43 x_e$ of the time in expectation. The feasible region is given in \Cref{def:region}. Since $R$ is nonempty, 1) and 2) are sufficient to give the result since we can induce any distribution 
on the topmost cycle cut $V \smallsetminus \{r\}$.

\begin{defn}[The Feasible Region]
\label{def:region}
Let 
$$R = \left\{(p_1, p_2, p_3, p_4) \in \mathbb{R}^4_+: p_1+p_2+p_3+p_4 = 1, \, p_1 + p_2 = \frac23,\, p_2 + p_4 \geq \frac13\right\}.$$
See \Cref{fig:region} for a visualization of $R$ in a 2-dimensional space.
\end{defn}

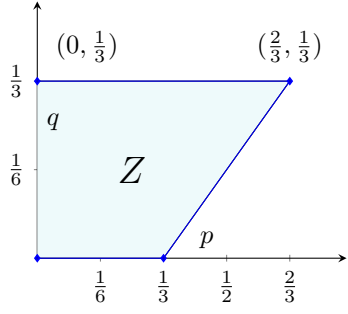
\begin{figure}[ht]
    \centering
    \begin{tikzpicture}[auto, semithick, black, node distance=3.5cm, scale=0.6]
    \begin{axis}[
    axis lines = middle,
    xlabel={$p$},
    ylabel={$q$},
    xmin = 0,
    xmax = 4.9,
    ymin = 0, 
    ymax = 2.9,
    xtick={0, 1, 2, 3, 4},
    xticklabels={0, $\frac16$, $\frac13$, $\frac12$, $\frac23$},
    ytick={0, 1, 2},
    yticklabels={0, $\frac16$, $\frac13$}
    ]
    \addplot[domain=2:4, name path=A] {x-2};
    \addplot[domain=0:4, name path=C] {2};
    \path[name path=B] (0,0) -- (0,2);
    \addplot[domain=0:2, name path=D] {0};
    
    \addplot[blizzardblue!20] fill between[of=C and D];
    
    \addplot coordinates {(0,0) (2,0) (4,2) (0,2)};
    
    \node[label=above right:{$(0, \frac13)$}] at (axis cs:0,2) {};
    \node[label=above:{$(\frac23, \frac13)$}] at (axis cs:4,2) {};
    
    \node at (axis cs:1.5,1) {\Large $Z$};
    
    \end{axis}
    \end{tikzpicture}
  \caption{The feasible region of distributions is $R = \{(p, \frac23-p, \frac13 - q, q): (p, q) \in Z\}$, where $Z$ is the polytope above.}
    \label{fig:region}
\end{figure}

\begin{figure}[htbp]
\centering
\begin{tikzpicture}[scale=0.3]
\foreach \a/\b in {0/0,0/-12,14/0,14/-12}{
	\path[-] (\a-3,\b+0.5) edge[thick] node {} (\a-3,\b+2);
}
\foreach \a/\b in {0/-4.5,0/-16.5,14/0,14/-12}{
	\path[-] (\a+3,\b+0.5) edge[thick] node {} (\a+3,\b+2);
}
\foreach \a/\b in {0/-4.5,0/-12,14/-4.5,14/-12}{
	\path[-] (\a+-3,\b-0.5) edge[thick] node {} (\a-3,\b-2);
}
\foreach \a/\b in {0/0,0/-16.5,14/-4.5,14/-12}{
	\path[-] (\a+3,\b-0.5) edge[thick] node {} (\a+3,\b-2);
}
\foreach \a/\b in {0/0,0/-4.5,0/-12,0/-16.5,14/0,14/-4.5,14/-12,14/-16.5}{
\draw [black,line width=1.2pt, dashed] (\a,\b) ellipse (4.2 and 1.7);
\foreach \c/\d in {\a-3/\b,\a/\b,\a+3/\b}{
\draw[bleudefrance,fill=blizzardblue,thick] (\c,\d) circle (24pt); 
}
}
\foreach \a/\b/\c/\d in {0/0/1/A,0/-12/3/C,14/0/2/B,14/-12/4/D}{
\node [draw=none] at (\a-4,\b+3) () {$S_\c$};
\node[state,minimum size=100pt] (\d) at (\a,\b-2.25) {};
}
\end{tikzpicture}
    \caption{The patterns and how they map to states of a Markov chain. The states are unchanged regardless of the number of children: they are defined only with respect to which of the edges are used by the tour. Note that we ignore doubled edges.}
    \label{fig:patterns}
\end{figure}
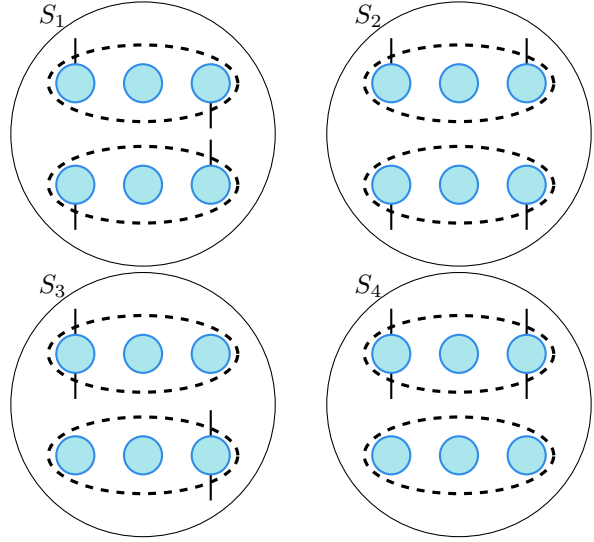

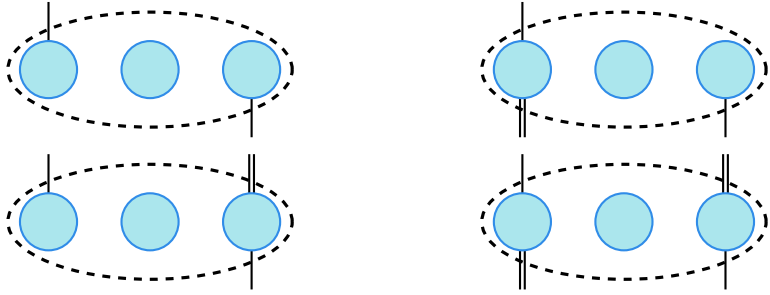
\begin{figure}
    \centering
    \begin{tikzpicture}[scale=0.45]
 \foreach \a/\b in {0/0,0/-4.5,14/0,14/-4.5}{
  	\path[-] (\a-3,\b+0.5) edge[thick] node {} (\a-3,\b+2);
 }
 \foreach \a/\b in {0/-4.5,14/-4.5}{
 	\path[-] (\a+3,\b+0.5) edge[thick,,xshift=-0.07cm] node {} (\a+3,\b+2);
 	\path[-] (\a+3,\b+0.5) edge[thick,xshift=0.07cm] node {} (\a+3,\b+2);
 }
 \foreach \a/\b in {14/0,14/-4.5}{
 	\path[-] (\a+-3,\b-0.5) edge[thick,xshift=-0.07cm] node {} (\a-3,\b-2);
 	\path[-] (\a+-3,\b-0.5) edge[thick,xshift=0.07cm] node {} (\a-3,\b-2);
 }
 \foreach \a/\b in {0/0,0/-4.5,14/0,14/-4.5}{
 	\path[-] (\a+3,\b-0.5) edge[thick] node {} (\a+3,\b-2);
 }
\foreach \a/\b in {0/0,0/-4.5,14/0,14/-4.5}{
\draw [black,line width=1.2pt, dashed] (\a,\b) ellipse (4.2 and 1.7);
\foreach \c/\d in {\a-3/\b,\a/\b,\a+3/\b}{
\draw[bleudefrance,fill=blizzardblue,thick] (\c,\d) circle (24pt); 
}
}
\end{tikzpicture}
    \caption{In our illustrations of the patterns entering a given cycle cut, any edge that is not present may either be unused or doubled. Therefore, all four of the given edge configurations are represented by the upper left most state, $S_1$.}
    \label{fig:doubled}
\end{figure}

To describe the transitions of the Markov chain, we give (randomized) rules that dictate, for a cycle cut $C$ and a pattern entering it, how to connect up its children. These rules depend on whether $C$ is even or odd. The final form of the Markov chains is illustrated in \Cref{fig:general_mcs}.\footnote{In the figure, if there is a variable on an arc, it means that any transition probability in the range of that variable is possible. For example, in $\pe$, we can transition from $S_2$ to $S_1$ with probability $z$ for any $z \in [0,1]$; the transition from $S_2$ to $S_3$ then happens with probability $1 - z$.} The meaning of taking one transition is as follows. Suppose the distribution of patterns entering $C$ is $(p_1, p_2, p_3, p_4)$, and suppose $(q_1, q_2, q_3, q_4)$ is the resulting distribution after one transition of a Markov chain. What this means is that for each child of $C$, the distribution of patterns entering it will be \textbf{either} $(q_1, q_2, q_3, q_4)$ or $(q_2, q_1, q_3, q_4)$ depending on if the child is straight or twisted, respectively (see \Cref{def:intersectingcyclecut} and \cref{fig:caterpillar}). In particular, it can be shown that if $(q_1,q_2,q_3,q_4)$ is the distribution induced on a child which is a straight cycle cut, then $(q_2,q_1,q_3,q_4)$ would be the distribution induced on a child which is a twisted cycle cut. Thus, it is sufficient to check that: i) the distributions induced on straight children lie in the feasible region and ii) if $(q_1,q_2,q_3,q_4)$ is a distribution induced on straight children, then $(q_2,q_1,q_3,q_4)$ is also in the feasible region. This corresponds to the set of distributions induced on the children being symmetric under this transformation.\footnote{Note that the feasible region itself is \emph{not} symmetric under this transformation. The distribution induced on the children is thus a symmetric subset of the feasible region.}


We ensure that in all cases, each edge in $E^{\rightarrow}(C)$ is used $\frac12$, $\frac12$, $1$, $1$ times in expectation if the pattern entering $C$ belongs to state 1, 2, 3, 4, respectively. (We do not explicitly prove this, but it is straightforward to check when we give the rules from connecting the children.) \red{Therefore, if $(p_1, p_2, p_3, p_4)$ are the probabilities that we are in states 1, 2, 3, 4 respectively, then each edge in $E^{\rightarrow}(C)$ is used exactly
$$
\frac12 p_1 + \frac12 p_2 + p_3 + p_4 = 1 - \frac12(p_1 + p_2)
$$
of the time in expectation. Thus, requiring that each edge be used at most $\frac23 = \frac43 x_e$ of the time is equivalent to requiring that $p_1 + p_2 \geq \frac23$. Note that if $\pv \in R$, then $p_1 + p_2 = \frac23$ (i.e. each edge is used \emph{exactly} $\frac23$ of the time).}



\red{
In Section \ref{sec:general_chains}, we illustrate the rules for connecting the children, and show why this leads to the Markov chains in \Cref{fig:general_mcs}. Then in \Cref{sec:algo}, we give a specific example of how to maintain feasible distributions on all the cuts in the hierarchy by choosing appropriate transition probabilities on the Markov chains in \Cref{fig:general_mcs}. This already gives a $\frac43$-approximation algorithm for half-integral cycle cut TSP, which we describe in Algorithm \ref{alg:main}. Finally in \Cref{sec:region_thm}, we complete the picture by proving that $R$ (as given in \Cref{def:region}), is the maximal feasible region of distributions achievable through these chains. 
}

\subsection{The Markov Chains}
\label{sec:general_chains}
We describe the rules to connect the child cuts (given the edges entering the parent), which will allow us to transition according to the Markov chains depicted in \Cref{fig:general_mcs}.\footnote{In the figure, if there is a variable on an arc, it means that any transition probability in the range of that variable is possible. For example, in $\pe$, we can transition from $S_2$ to $S_1$ with probability $z$ for any $z \in [0,1]$; the transition from $S_2$ to $S_3$ then happens with probability $1 - z$.}

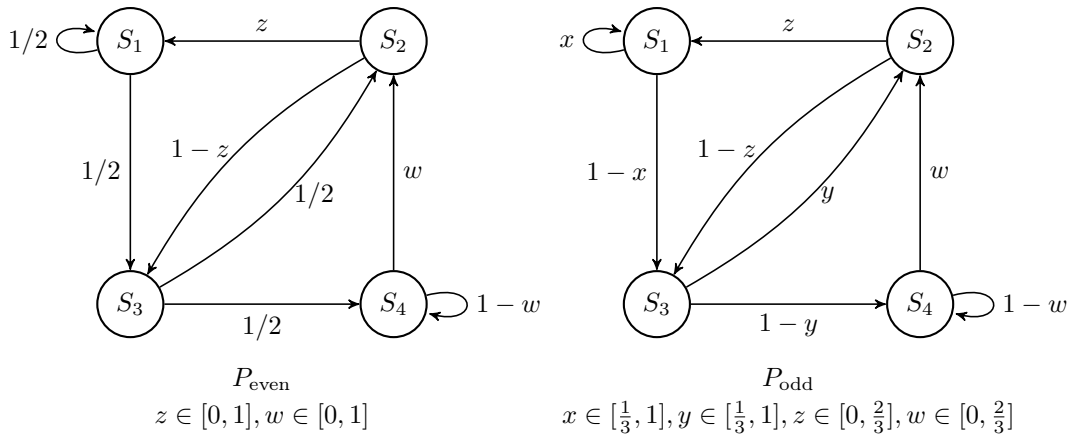
\begin{figure}[ht]
    \centering
        \begin{tikzpicture}[->, >=stealth', auto, semithick, node distance=3.5cm]
\tikzstyle{every state}=[fill=white,draw=black,thick,text=black,scale=1]

\node[state]    (A2)   {$S_1$};
\node[state]    (B2)[right of=A2]   {$S_2$};
\node[state]    (C2)[below of=A2]   {$S_3$};
\node[state]    (D2)[below of=B2]   {$S_4$};
\path
(A2) edge[loop left]     		node{$1/2$}         (A2)
    edge[left]					node{$1/2$}      	(C2)
(B2) edge[above]         		node{$z$}           (A2)
edge[left,bend right=15]         		node{$1-z$}           (C2)
(C2) edge[right,bend right=15]  				node{$1/2$}         (B2)
	edge[below]         			node{$1/2$}         (D2)
(D2) edge[right,loop right]		node{$1-w$}     	(D2)
	edge[right]		    		node{$w$}     	(B2);
\node [draw=none] at (1.75,-5) () {$z \in [0,1], w \in [0,1]$};

\node[state]    (A1)[right of=B2]               {$S_1$};
\node[state]    (B1)[right of=A1]   {$S_2$};
\node[state]    (C1)[below of=A1]   {$S_3$};
\node[state]    (D1)[below of=B1]   {$S_4$};
\path
(A1) edge[loop left]     		node{$x$}         (A1)
    edge[left]					node{$1-x$}      	(C1)
(B1) edge[above]         		node{$z$}         (A1)
	edge[bend right=15,left]    node{$1-z$}         (C1)
(C1) edge[bend right=15,right]  node{$y$}         (B1)
	edge[below]         			node{$1-y$}         (D1)
(D1) edge[right,loop right]		node{$1-w$}     	(D1)
	edge[right]		    		node{$w$}     	(B1);
\node [draw=none] at (1.75,-4.5) () {$P_{\text{even}}$};
\node [draw=none] at (8.75,-4.5) () {$P_{\text{odd}}$};
\node [draw=none] at (8.75,-5) () {$x \in [\frac13,1], y \in [\frac13,1], z \in [0,\frac23], w \in [0, \frac23]$};
\end{tikzpicture}
    \caption{The variables on the arcs indicate that one can feasibly transition according any probability in the range. \bedit{The ranges are given for the cases where the number of children is 2 (in the even case), and 3 (in the odd case), since these are the most restrictive. In general, when there are $k$ children, the ranges are a superset of those given here, and depend on $k$.}}
    \label{fig:general_mcs}
\end{figure}

\begin{prop}
\label{prop:mcs}
For any cycle cut $C \in {\cal H}$ and any distribution of patterns entering $C$, there is a way to connect its children so that the induced distribution on each child is given by 1) applying the corresponding Markov chain in \Cref{fig:general_mcs}, and then 2) swapping the first two coordinates if the child is twisted.
\end{prop}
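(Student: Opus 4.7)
The plan is to describe, for each of the four states $S_j \in \{S_1, S_2, S_3, S_4\}$ the parent cycle cut $C$ may be in, an explicit randomized rule for assigning parities to the $2(k-1)$ chain edges in $E^\to(C)$, and then to verify that (i) every child $a_i$ inherits an even total parity on its four boundary edges so its pattern is well-defined, and (ii) the marginal distribution of the pattern at each child matches the corresponding transition in \Cref{fig:general_mcs}. Given these per-state rules, the claim for an arbitrary parent distribution follows by linearity: we apply the rule conditional on the parent's state and the resulting marginals combine as the Markov chain would predict.

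To set up, I would label the two parallel chain edges between consecutive children $a_i$ and $a_{i+1}$ as a top edge $e_i^+$ and a bottom edge $e_i^-$ using $C$'s caterpillar drawing. A rule is then a joint distribution over parity assignments to these $2(k-1)$ edges. I would describe it sequentially: first pick the parity of $(e_1^+, e_1^-)$ from a biased coin determined by the parent's state (and constrained so that $a_1$ has even boundary parity), then pick each subsequent pair's parity conditional on the previous pair from another biased coin (constrained so the intermediate child $a_i$ has even boundary parity). The biases of these coins correspond to the parameters $x, y, z, w$ appearing on the arcs of the Markov chains. The split into two chains $\pe$ and $\po$ arises because propagating parity along a chain of even length preserves a certain orientation invariant (which relaxes the constraint at $a_k$), whereas a chain of odd length reverses it (which tightens the allowed parameter ranges, producing $\po$).

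Verifying (i) is a straightforward parity computation mod $2$ at each child. For (ii), I would compute, for each combination of parent state and child $a_i$, the marginal distribution of the pattern $\pi(a_i)$ from the chain edges incident to $a_i$ together with the $\delta(C)$ edges at $a_i$ when $a_i$ is external, and group the resulting configurations into the four states. A helpful simplification is that by constructing the sequential biases symmetrically every internal child has the same marginal, and the marginals at the external children $a_1$ and $a_k$ coincide with the internal one because $\delta(C) \cap \delta(a_1)$ and $\delta(C) \cap \delta(a_k)$ play the role of a ``virtual'' chain pair extending the chain at each end. Finally, for the coordinate swap: by \Cref{def:intersectingcyclecut}, a twisted child's caterpillar drawing swaps the $\delta^L$ and $\delta^R$ assignment on one side relative to the straight convention, which exactly interchanges the labels $S_1$ and $S_2$ (these two states differ only by the diagonal-vs-horizontal placement of the two odd edges), while $S_3$ and $S_4$ are fixed points of this flip.

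The main obstacle I expect is the bookkeeping: for each of four parent states, both parities of $k$, and each child position (left-external, internal, right-external), I must compute the induced distribution and check it matches the claimed transition, and then confirm that the parameter ranges stated in \Cref{fig:general_mcs} are simultaneously achievable in the joint distribution (i.e.\ that the sequential biases all land in $[0,1]$ at every step for every starting state). The underlying ideas are just parity propagation and a small symmetry argument, but the case enumeration is extensive.
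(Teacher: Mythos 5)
Your overall plan follows the same route as the paper -- explicit randomized rules for assigning the chain edges, conditioned on the parent's state, with the even/odd split coming from parity propagation along the chain and the $S_1 \leftrightarrow S_2$ swap for twisted children -- but as written it has a genuine gap: the constructive core is missing, and the shortcut you use in its place does not hold. The proposition requires that \emph{every} child $a_1,\dots,a_k$ receive the \emph{same} induced distribution, and that the achievable transition probabilities cover the stated ranges (e.g.\ $x\in[\frac13,1]$, $z\in[0,\frac23]$ in $\po$). Your claim that ``constructing the sequential biases symmetrically'' makes every internal child have the same marginal, with the external children matching via a ``virtual'' pair, is not justified and is false in general: with homogeneous conditional coins the marginal at child $i$ depends on $i$ (it only equalizes if the along-chain process is started in its stationary distribution, which the fixed boundary pattern entering $C$ prevents). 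The transitions that pin down the parameter ranges are exactly the extreme ones with probabilities $\frac1k$ and $\frac{k-1}{k}$, and these are realized in the paper by \emph{global} rules -- e.g.\ ``choose one child uniformly at random to transition to state~1 (resp.\ 3, 2, 4) and fill in the unique consistent assignment for the rest'' -- not by position-independent sequential biases. A sequential description can of course encode such a joint distribution, but then the biases are position-dependent and the uniformity of the child marginals is something you must engineer and verify state by state; this case analysis (four states, even and odd $k$, both pictures per state) is precisely the content of the paper's proof and is absent from your proposal.

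A second, smaller issue is that you reduce the rule to a joint distribution over \emph{parities} of the chain edges. The pattern at a child is indeed determined by parities, but the connection rule must also decide, for each parity-$0$ edge, whether it is unused or doubled: this matters for connectivity of the eventual Eulerian tour (the paper's rules always keep at least one edge, possibly doubled, between consecutive children, whereas a pure parity assignment permits both edges of a pair to vanish) and for the per-state edge-usage accounting ($\frac12,\frac12,1,1$) that the rest of the argument relies on. So even granting the parity bookkeeping you describe, you would still need to specify the multiplicities to make ``a way to connect its children'' meaningful.
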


\begin{proof}
To show that we can always feasibly transition according to the Markov chains in \Cref{fig:general_mcs}, we give rules for connecting the children that result in these transitions. Consider a cycle cut $C$. We consider two cases, depending on whether $C$ is even or odd.

\textbf{Case 1: $C$ is even.} 
\red{We consider the states one by one, and argue that the transitions depicted in \Cref{fig:general_mcs} are achievable.
\begin{enumerate}
    \item \textbf{State 1.} The transitions out of state 1 are depicted in \Cref{fig:smooth_even_s1_general}. For each pair of edges inside the cycle cut, we pick one out of the two uniformly at random. This has the effect of transitioning each child to state 3 with probability $\frac12$, and to state 1 with probability $\frac12$.
    \begin{figure}[!htbp]
        \centering

\begin{tikzpicture}[scale=0.6]
\foreach \a/\b in {0/0,2/0,4/0,12/0,14/0,16/0}{
	\path[-] (\a-2.4,\b+0.5) edge[thick,dotted,black] node {} (\a-1.6,\b+0.5);
	\path[-] (\a-2.4,\b-0.5) edge[thick,dotted,black] node {} (\a-1.6,\b-0.5);
}

\path[-] (-3,0.5) edge[thick,black] node {} (-3,2);
\path[-] (3,-0.5) edge[thick,black] node {} (3,-2);
\path[-] (9,-0.5) edge[thick,black] node {} (9,-2);
\path[-] (15,0.5) edge[thick,black] node {} (15,2);

\foreach \a/\b in {0/0,12/0}{
\draw [black,line width=1.2pt, dashed] (\a,\b) ellipse (4.2 and 1.7);
\foreach \c/\d in {\a-3/\b,\a-1/\b,\a+1/\b,\a+3/\b}{
\draw[bleudefrance,fill=blizzardblue,thick] (\c,\d) circle (20pt); 
}
}

\end{tikzpicture}

        \caption{Transition for state 1 in the even case.}
        \label{fig:smooth_even_s1_general}
    \end{figure}
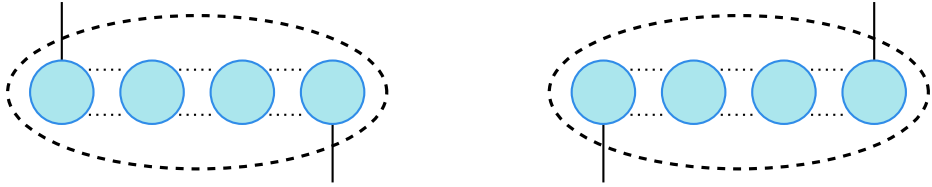
    \item \textbf{State 2.} The way we transition out of this state is depicted in \Cref{fig:smooth_even_s2_general}. With probability $\alpha \in [0,1]$, we alternate taking the top and bottom edges of each pair of edges such that each child transitions back to state 1 with probability 1. (Note that this rule \emph{maximizes} the probability of transitioning back to state 1.) Otherwise, with probability $1-\alpha$, we make all the children transition to state 3, by always picking the top edge of each pair or the bottom edge of each pair. Thus, the net transition probabilities out of state 2 can be made to be $(\alpha, 0, (1-\alpha), 0)$, for any $\alpha \in [0,1]$.
    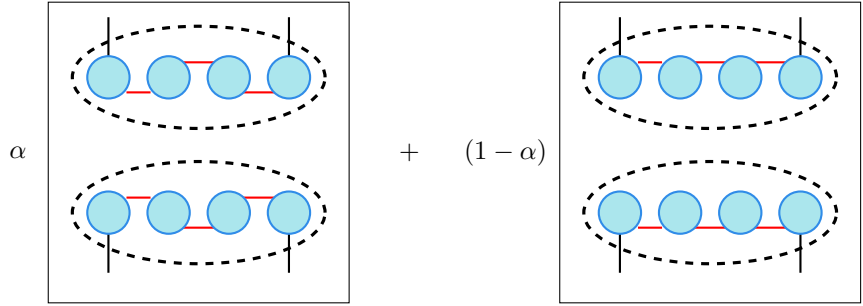
\begin{figure}[!htbp]
        \centering
            \begin{tikzpicture}[scale=0.4]
\path[-] (-2.4,-0.5) edge[thick,red] node {} (-1.6,-0.5);
\path[-] (-1,0.5) edge[thick,red] node {} (1,0.5);
\path[-] (1,-0.5) edge[thick,red] node {} (3,-0.5);

\path[-] (-2.4,-4) edge[thick,red] node {} (-1.6,-4);
\path[-] (-1,-5) edge[thick,red] node {} (1,-5);
\path[-] (1,-4) edge[thick,red] node {} (3,-4);

\path[-] (14.6,0.5) edge[thick,red] node {} (15.4,0.5);
\path[-] (16,0.5) edge[thick,red] node {} (18,0.5);
\path[-] (18,0.5) edge[thick,red] node {} (20,0.5);

\path[-] (14.6,-5) edge[thick,red] node {} (15.4,-5);
\path[-] (16,-5) edge[thick,red] node {} (18,-5);
\path[-] (18,-5) edge[thick,red] node {} (20,-5);

\foreach \a/\b in {0/0,17/0}{
	\path[-] (\a-3,\b+0.5) edge[thick,black] node {} (\a-3,\b+2);
	\path[-] (\a+3,\b+0.5) edge[thick,black] node {} (\a+3,\b+2);
}

\foreach \a/\b in {0/-4.5,17/-4.5}{
    \path[-] (\a-3,\b-0.5) edge[thick,black] node {} (\a-3,\b-2);
	\path[-] (\a+3,\b-0.5) edge[thick,black] node {} (\a+3,\b-2);
}
\foreach \a/\b in {0/0,0/-4.5,17/0,17/-4.5}{
\draw [black,line width=1.2pt, dashed] (\a,\b) ellipse (4.2 and 1.7);
\foreach \c/\d in {\a-3/\b,\a-1/\b,\a+1/\b,\a+3/\b}{
\draw[bleudefrance,fill=blizzardblue,thick] (\c,\d) circle (20pt); 
}
}
\draw (-5,2.5) rectangle (5,-7.5);
\draw (12,2.5) rectangle (22,-7.5);
\node [draw=none] at (-6,-2.5) () {$\alpha$};
\node [draw=none] at (7,-2.5) () {$+$};
\node [draw=none] at (10.2,-2.5) () {$(1-\alpha)$};
\end{tikzpicture}
        
        \caption{Transition for state 2 in the even case.}
        \label{fig:smooth_even_s2_general}
    \end{figure}
    \item \textbf{State 3.} The transitions out of this state are depicted in \Cref{fig:smooth_even_s3_general}.     To connect up the children in this state, we consider each pair of edges in $E^{\rightarrow}(C)$ independently. Let $e,f$ be such a pair. Then with probability $\frac12$, we use both $e$ and $f$ (one copy each), as is illustrated by the solid orange edges. Otherwise, we either double $e$ or double $f$, with equal probability, shown using the dotted black edges. The net effect is that each child transitions to state 2 with probability $\frac12$, and to state 4 with probability $\frac12$.

    \begin{figure}[!htbp]
        \centering

        \begin{tikzpicture}[scale=0.6]
\foreach \a/\b in {0/0,2/0,4/0,12/0,14/0,16/0}{
	\path[-] (\a-2.4,\b+0.53) edge[thick,dotted,black] node {} (\a-1.6,\b+0.53);
	\path[-] (\a-2.4,\b+0.4) edge[thick,dotted,black] node {} (\a-1.6,\b+0.4);
	\path[-] (\a-2.4,\b-0.53) edge[thick,dotted,black] node {} (\a-1.6,\b-0.53);
	\path[-] (\a-2.4,\b-0.4) edge[thick,dotted,black] node {} (\a-1.6,\b-0.4);
	\path[-] (\a-3,\b+0.6) edge[thick,orange,bend left=10] node {} (\a-1,\b+0.6);	
	\path[-] (\a-3,\b-0.6) edge[thick,orange,bend right=10] node {} (\a-1,\b-0.6);
}

\path[-] (-3,0.5) edge[thick,black] node {} (-3,2);
\path[-] (-3,-0.5) edge[thick,black] node {} (-3,-2);
\path[-] (15,-0.5) edge[thick,black] node {} (15,-2);
\path[-] (15,0.5) edge[thick,black] node {} (15,2);

\foreach \a/\b in {0/0,12/0}{
\draw [black,line width=1.2pt, dashed] (\a,\b) ellipse (4.2 and 1.7);
\foreach \c/\d in {\a-3/\b,\a-1/\b,\a+1/\b,\a+3/\b}{
\draw[bleudefrance,fill=blizzardblue,thick] (\c,\d) circle (20pt); 
}
}

\end{tikzpicture}
        
        \caption{Transition for state 3 in the even case.}
        \label{fig:smooth_even_s3_general}
    \end{figure}
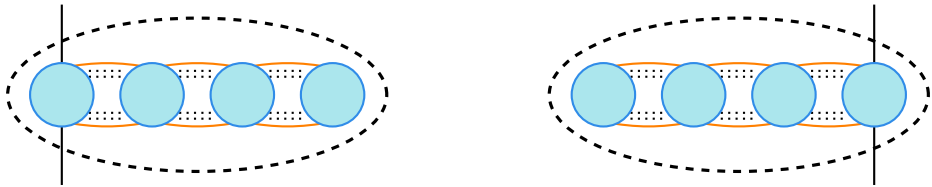
    \item \textbf{State 4.} The transitions out of this state are depicted in \Cref{fig:smooth_even_s4_general}.  With probability $\alpha$, we make all children transition to state 2 with probability 1. To do this, first suppose $C$ has all 4 single edges entering it (the top picture in the left box). In this case, we consider the pairs of edges in $E^{\rightarrow}(C)$ from left to right, and alternate 1) doubling one of the two edges with equal probability (shown by the dotted black edges), and 2) using both edges (shown by the solid black edges). Because $C$ is even, the rightmost pair of edges ends up falling in case 1) of the alternating rule, and so all children transition to state 2. The case where all the edges entering $C$ are used an even number of times (the bottom picture in the left box) is quite similar, except we begin the alternating rule by using both edges. 
    
    On the other hand, with probability $1-\alpha$, we transition back to state 4 with probability 1. This is accomplished by using each pair of edges in the top case of state 4, and by doubling one edge from each pair uniformly at random in the bottom case of state 4. The net transition probabilities are then $(0, \alpha, 0, 1 -\alpha)$, where $\alpha$ can be any number from 0 to 1.

    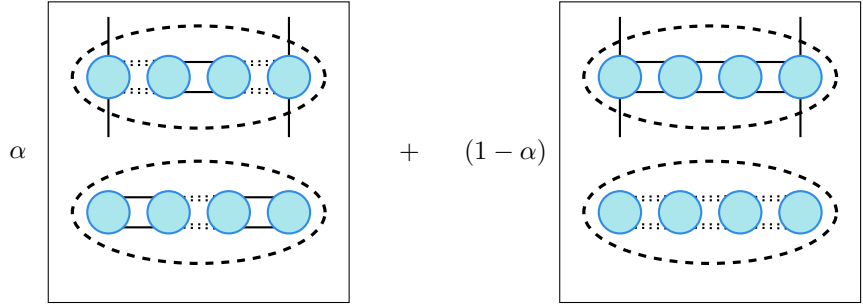
\begin{figure}[!htbp]
    \centering
\begin{tikzpicture}[scale=0.4]
\foreach \a/\b in {-3/0,1/0,-1/-4.5,14/-4.5,16/-4.5,18/-4.5}{
	\path[-] (\a,\b+0.5) edge[thick,dotted] node {} (\a+2,\b+0.55);
	\path[-] (\a,\b+0.4) edge[thick,dotted] node {} (\a+2,\b+0.4);
	\path[-] (\a,\b-0.5) edge[thick,dotted] node {} (\a+2,\b-0.55);
	\path[-] (\a,\b-0.4) edge[thick,dotted] node {} (\a+2,\b-0.4);
}

\foreach \a/\b in {-1/0,-3/-4.5,1/-4.5,14/0,16/0,18/0} {
	\path[-] (\a,\b+0.5) edge[thick] node {} (\a+2,\b+0.5);
	\path[-] (\a,\b-0.5) edge[thick] node {} (\a+2,\b-0.5);
}

\foreach \a/\b in {0/0,17/0}{
	\path[-] (\a-3,\b+0.5) edge[thick,black] node {} (\a-3,\b+2);
}
\foreach \a/\b in {0/0,17/0}{
	\path[-] (\a+3,\b+0.5) edge[thick,black] node {} (\a+3,\b+2);
}
\foreach \a/\b in {0/0,17/0}{
	\path[-] (\a+-3,\b-0.5) edge[thick,black] node {} (\a-3,\b-2);
}
\foreach \a/\b in {0/0,17/0}{
	\path[-] (\a+3,\b-0.5) edge[thick,black] node {} (\a+3,\b-2);
}
\foreach \a/\b in {0/0,0/-4.5,17/0,17/-4.5}{
\draw [black,line width=1.2pt, dashed] (\a,\b) ellipse (4.2 and 1.7);
\foreach \c/\d in {\a-3/\b,\a-1/\b,\a+1/\b,\a+3/\b}{
\draw[bleudefrance,fill=blizzardblue,thick] (\c,\d) circle (20pt); 
}
}
\draw (-5,2.5) rectangle (5,-7.5);
\draw (12,2.5) rectangle (22,-7.5);
\node [draw=none] at (-6,-2.5) () {$\alpha$};
\node [draw=none] at (7,-2.5) () {$+$};
\node [draw=none] at (10.2,-2.5) () {$(1-\alpha)$};
\end{tikzpicture}
    \caption{Transition for state 4 in the even case.}
    \label{fig:smooth_even_s4_general}
    \end{figure}

 \end{enumerate}}

This shows that the general form of the even chains, as depicted in \Cref{fig:general_mcs}, are achievable.\footnote{\bedit{Actually, note that slightly more general transitions out of states 2 and 3 are possible as a function of $k$, the number of children. For example, one can show (similarly to the odd case) there are rules for connecting the children that allow the transition from state 3 to state 2 to be any value in the range $[\frac{1}{k}, \frac{k-1}{k}]$. However, the transitions are most restricted when $k=2$, which results in the Markov chains we presented in \Cref{fig:general_mcs}. (e.g. Note that $\frac1k = \frac{k-1}{k} = \frac12$ if $k=2$.)} }


\textbf{Case 2: $C$ is odd.} To show that we can feasibly transition according to the Markov chain $\po$ in \Cref{fig:general_mcs}, we consider each state one by one.
\begin{enumerate}
    \item \textbf{State 1.} The way we transition out of this state is depicted in \Cref{fig:smooth_odd_s1_general}. With probability $\alpha \in [0,1]$, we alternate taking the top and bottom edges of each pair of edges such that each child transitions back to state 1 with probability 1. (Note that this rule \emph{maximizes} the probability of transitioning back to state 1.) Otherwise, with probability $1-\alpha$, we choose one of the children uniformly at random to transition to state 1 (the rightmost child in the figure), and make all other children transition to state 3. Note that once we have chosen which child to transition to state 1, there is a unique choice of edges that makes that child transition to state 1 and all other children transition to state 3. The net effect is that the children  transition to state 1 with probability $\frac1k$, and to state 3 with probability $1 - \frac1k$, where $k$ is the number of children. (This rule \emph{minimizes} the probability of transitioning back to state 1.) Thus, the net transition probabilities out of state 1 are $(\alpha + \frac{1-\alpha}{k}, 0, (1-\alpha)\cdot \frac{k-1}{k}, 0)$. As $\alpha$ ranges from 0 to 1, the probability of transitioning back to state 1 ranges from $\frac1k$ to $1$. Since this range is most restricted when $k=3$, we conclude that it is always feasible to transition out of state 1 according to the probabilities $(x, 0, 1-x, 0)$, for any $x \in [\frac13, 1]$. 
    
    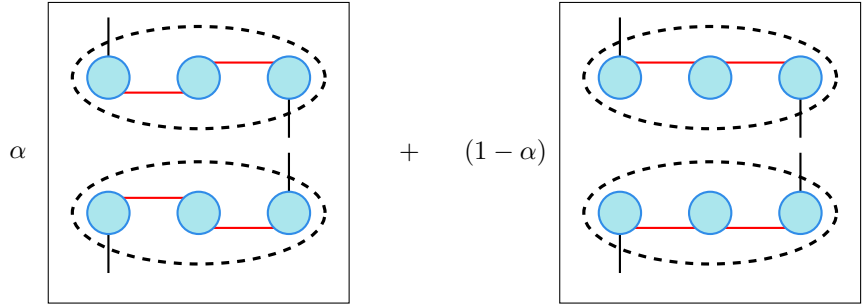
\begin{figure}[!htbp]
    \centering
    \begin{tikzpicture}[scale=0.4]

\path[-] (-3,-0.5) edge[thick,red] node {} (0,-0.5);
\path[-] (0,0.5) edge[thick,red] node {} (3,0.5);

\path[-] (-3,-4) edge[thick,red] node {} (0,-4);
\path[-] (0,-5) edge[thick,red] node {} (3,-5);

\foreach \a/\b in {14/0,17/0}{
	\path[-] (\a,\b+0.5) edge[thick, red] node {} (\a+3,\b+0.5);
}

\foreach \a/\b in {14/-4.5,17/-4.5}{
	\path[-] (\a,\b-0.5) edge[thick, red] node {} (\a+3,\b-0.5);
}

\foreach \a/\b in {0/0,17/0}{
	\path[-] (\a-3,\b+0.5) edge[thick,black] node {} (\a-3,\b+2);
}
\foreach \a/\b in {0/-4.5,17/-4.5}{
	\path[-] (\a+3,\b+0.5) edge[thick,black] node {} (\a+3,\b+2);
}
\foreach \a/\b in {0/-4.5,17/-4.5}{
	\path[-] (\a+-3,\b-0.5) edge[thick,black] node {} (\a-3,\b-2);
}
\foreach \a/\b in {0/0,17/0}{
	\path[-] (\a+3,\b-0.5) edge[thick,black] node {} (\a+3,\b-2);
}
\foreach \a/\b in {0/0,0/-4.5,17/0,17/-4.5}{
\draw [black,line width=1.2pt, dashed] (\a,\b) ellipse (4.2 and 1.7);
\foreach \c/\d in {\a-3/\b,\a/\b,\a+3/\b}{
\draw[bleudefrance,fill=blizzardblue,thick] (\c,\d) circle (20pt); 
}
}
\draw (-5,2.5) rectangle (5,-7.5);
\draw (12,2.5) rectangle (22,-7.5);
\node [draw=none] at (-6,-2.5) () {$\alpha$};
\node [draw=none] at (7,-2.5) () {$+$};
\node [draw=none] at (10.2,-2.5) () {$(1-\alpha)$};
\end{tikzpicture}
    \caption{Transitions out of  state 1 in the $\po$ chain. On the left, each child transitions back to state 1. On the right, we pick one of the children uniformly at random to transition back to state 1 (we visualize this to be the rightmost child in the picture), and the remaining children transition to state 3.}
    \label{fig:smooth_odd_s1_general}
\end{figure}
    
    \item \textbf{State 2.} The way we transition out of state 2 is depicted in \Cref{fig:smooth_odd_s2_general}. With probability $\alpha$, the net transition probabilities out of state 2 are $(\frac{k-1}{k}, 0, \frac1k, 0)$, where $k$ is the number of children. This is accomplished by choosing one of the children uniformly at random to transition to state 3, and the having the remaining children transition to state 1.  In more detail, suppose the children are $a_1, \ldots, a_k$ from left to right, and suppose we chose $a_i$ to transition to state 3. In the case where the two edges enter $C$ from the top (the top left picture in \Cref{fig:smooth_odd_s2_general}), we go through $a_1, a_2, \ldots, a_i$, alternatingly using the bottom edge from $a_1$ to $a_2$, the top edge from $a_2$ to $a_3$, and so on, until we reach $a_i$. We then go through $a_k, a_{k-1}, \ldots, a_i$, using the bottom edge from $a_k$ to $a_{k-1}$, then the top edge from $a_{k-1}$ to $a_{k-2}$, and so on, until we reach $a_i$.  Since $C$ is odd, $a_i$ will end up either having two edges incident to it from the top or two edges incident to it from the bottom. Thus, $a_i$ transitions to state 3, and all the children except for $a_i$ transition to state 1. The case where the two edges enter $C$ from the bottom (the bottom left picture in \Cref{fig:smooth_odd_s2_general}) is the mirror image of this. Finally, since the child $a_i$ which transitions to state 3 is chosen uniformly at random, the net transition probabilities are  $(\frac{k-1}{k}, 0, \frac1k, 0)$.
    
    On the other hand, with probability $1-\alpha$ we always transition to state 3, by either always taking the top edge of each pair or the bottom edge of each pair, depending on if the two edges incident to $C$ enter from the top or the bottom, respectively. The overall transition probabilities are therefore $(\alpha \cdot \frac{k-1}{k}, 0, \frac{\alpha}{k} + (1-\alpha), 0)$. Since the range of transition probabilities is most constrained when $k=3$; we conclude it is feasible to transition out of state 2 with probabilities $(z, 0, 1-z, 0)$ for any $z \in [0, \frac23]$. 
    
        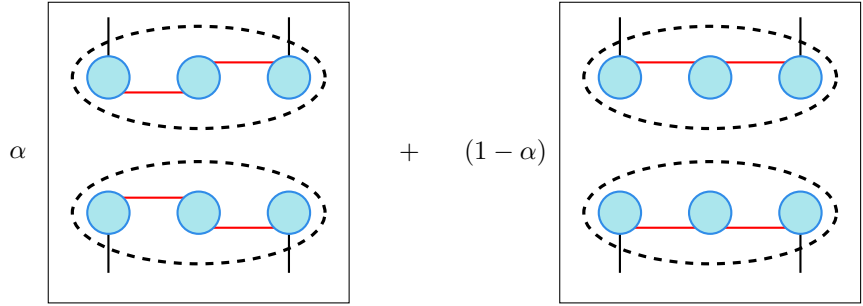
\begin{figure}[!htbp]
    \centering
    \begin{tikzpicture}[scale=0.4]

\path[-] (-3,-0.5) edge[thick,red] node {} (0,-0.5);
\path[-] (0,0.5) edge[thick,red] node {} (3,0.5);

\path[-] (-3,-4) edge[thick,red] node {} (0,-4);
\path[-] (0,-5) edge[thick,red] node {} (3,-5);

\foreach \a/\b in {14/0,17/0}{
	\path[-] (\a,\b+0.5) edge[thick, red] node {} (\a+3,\b+0.5);
}

\foreach \a/\b in {14/-4.5,17/-4.5}{
	\path[-] (\a,\b-0.5) edge[thick, red] node {} (\a+3,\b-0.5);
}

\foreach \a/\b in {0/0,17/0}{
	\path[-] (\a-3,\b+0.5) edge[thick,black] node {} (\a-3,\b+2);
	\path[-] (\a+3,\b+0.5) edge[thick,black] node {} (\a+3,\b+2);
}
\foreach \a/\b in {0/-4.5,17/-4.5}{
	\path[-] (\a-3,\b-0.5) edge[thick,black] node {} (\a-3,\b-2);
	\path[-] (\a+3,\b-0.5) edge[thick,black] node {} (\a+3,\b-2);
}

\foreach \a/\b in {0/0,0/-4.5,17/0,17/-4.5}{
\draw [black,line width=1.2pt, dashed] (\a,\b) ellipse (4.2 and 1.7);
\foreach \c/\d in {\a-3/\b,\a/\b,\a+3/\b}{
\draw[bleudefrance,fill=blizzardblue,thick] (\c,\d) circle (20pt); 
}
}
\draw (-5,2.5) rectangle (5,-7.5);
\draw (12,2.5) rectangle (22,-7.5);
\node [draw=none] at (-6,-2.5) () {$\alpha$};
\node [draw=none] at (7,-2.5) () {$+$};
\node [draw=none] at (10.2,-2.5) () {$(1-\alpha)$};
\end{tikzpicture}
    \caption{Transitions out of state 2 in the $\po$ chain. On the left, we pick one of the children uniformly at random to transition to state 3 (we visualize this to be the rightmost child in the picture), and the remaining children transition to state 1. On the right, each child transitions to state 3.}
    \label{fig:smooth_odd_s2_general}
\end{figure}

    \item \textbf{State 3.} The transition out of state 3 is depicted in \Cref{fig:smooth_odd_s3_general}. With probability $\alpha$, every child transitions to state 2. To do this, we start at the child with two edges entering it (i.e. the leftmost child in the top left picture of \Cref{fig:smooth_odd_s3_general}, and the rightmost child in the bottom left picture of \Cref{fig:smooth_odd_s3_general}), and for each pair of edges in $E^{\rightarrow}{(C)}$, we alternate 1) doubling one of the two edges with equal probability (shown by the dotted black edges), and 2) using both edges once (shown by the solid black edges). 
    
    On the other hand, with probability $1-\alpha$, we choose one child uniformly at random to transition to state 2, and the remaining children transition to state 4. The way we accomplish this is as follows: Suppose the children are $a_1, \ldots, a_k$ from left to right, and suppose we chose $a_i$ to transition to state 2. Then in the case where the two edges enter $C$ from the left (the top right picture in \Cref{fig:smooth_odd_s3_general}), we use both edges in each pair going from $a_1, a_2,\ldots, $ all the way to $a_i$. Then from $a_i$ to $a_k$, we double one edge uniformly at random between each pair. The case where the two edges enter $C$ from the left (the bottom right picture in \Cref{fig:smooth_odd_s3_general}) is the mirror image of this; we double one edge uniformly at random between each pair between $a_1$ and $a_i$, and then use both edges in each pair going from $a_i$ to $a_k$. Since the child which transitions to state 2 is chosen uniformly at random, the transition probabilities are then $(0, \frac{1}{k}, 0, \frac{k-1}{k})$. Taking the convex combination of this with the earlier rule which transitions to state 2 deterministically, the net transition probabilities are then $(0, \alpha + \frac{1-\alpha}{k}, 0, (1-\alpha) \cdot \frac{k-1}{k})$. As $\alpha$ ranges from 0 to 1, the transition probability to state 2 ranges from $\frac1k$ to 1. Since this range is most restricted when $k=3$, we conclude that it is always feasible to transition with probabilities $(0, y, 0, 1-y)$, for any $y \in [\frac13, 1]$.

    \begin{figure}[!htbp]
    \centering
        \begin{tikzpicture}[scale=0.4]

\foreach \a/\b in {-3/0,0/-4.5}{
	\path[-] (\a,\b+0.5) edge[thick,dotted] node {} (\a+3,\b+0.5);
	\path[-] (\a,\b+0.4) edge[thick,dotted] node {} (\a+3,\b+0.4);	
	\path[-] (\a,\b-0.5) edge[thick,dotted] node {} (\a+3,\b-0.5);
	\path[-] (\a,\b-0.4) edge[thick,dotted] node {} (\a+3,\b-0.4);
}

\foreach \a/\b in {0/0,-3/-4.5}{
	\path[-] (\a,\b+0.5) edge[thick] node {} (\a+3,\b+0.5);
	\path[-] (\a,\b-0.5) edge[thick] node {} (\a+3,\b-0.5);
}

\foreach \a/\b in {14/0,17/0}{
	\path[-] (\a,\b+0.6) edge[thick,orange,bend left=10] node {} (\a+3,\b+0.6);	
	\path[-] (\a,\b-0.6) edge[thick,orange,bend right=10] node {} (\a+3,\b-0.6);
}

\foreach \a/\b in {14/-4.5,17/-4.5}{
	\path[-] (\a,\b+0.5) edge[thick,dotted] node {} (\a+3,\b+0.5);
	\path[-] (\a,\b+0.4) edge[thick,dotted] node {} (\a+3,\b+0.4);	
	\path[-] (\a,\b-0.5) edge[thick,dotted] node {} (\a+3,\b-0.5);
	\path[-] (\a,\b-0.4) edge[thick,dotted] node {} (\a+3,\b-0.4);
}

\foreach \a/\b in {0/0,17/0}{
	\path[-] (\a-3,\b+0.5) edge[thick,black] node {} (\a-3,\b+2);
	}
	
\foreach \a/\b in {0/-4.5,17/-4.5}{
	\path[-] (\a+3,\b+0.5) edge[thick,black] node {} (\a+3,\b+2);
}
\foreach \a/\b in {0/0,17/0}{
	\path[-] (\a+-3,\b-0.5) edge[thick,black] node {} (\a-3,\b-2);
}
\foreach \a/\b in {0/-4.5,17/-4.5}{
	\path[-] (\a+3,\b-0.5) edge[thick,black] node {} (\a+3,\b-2);
}
\foreach \a/\b in {0/0,0/-4.5,17/0,17/-4.5}{
\draw [black,line width=1.2pt, dashed] (\a,\b) ellipse (4.2 and 1.7);
\foreach \c/\d in {\a-3/\b,\a/\b,\a+3/\b}{
\draw[bleudefrance,fill=blizzardblue,thick] (\c,\d) circle (20pt); 
}
}
\draw (-5,2.5) rectangle (5,-7.5);
\draw (12,2.5) rectangle (22,-7.5);
\node [draw=none] at (-6,-2.5) () {$\alpha$};
\node [draw=none] at (7,-2.5) () {$+$};
\node [draw=none] at (10.2,-2.5) () {$(1-\alpha)$};
\end{tikzpicture}
    \caption{Transitions out of state 3 in the general $\po$ chain. On the left, every child transitions to state 2. On the right, we pick one child uniformly at random to transition to state 2 (we visualize this to be the right child in the figure), and the remaining children transition to state 4.}
    \label{fig:smooth_odd_s3_general}
\end{figure}
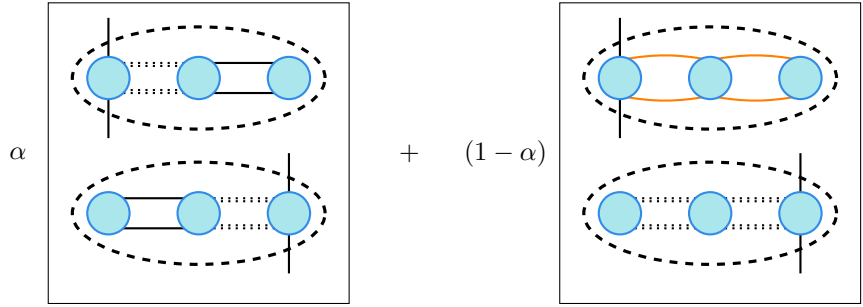
    
    \item \textbf{State 4.} \bedit{The transition out of state 4 is depicted in \Cref{fig:smooth_odd_s4_general}. With probability $\alpha$, we choose one child uniformly at random to transition to state 4, and make the other children transition to state 2. The way we accomplish this is as follows: Suppose the children are $a_1, \ldots, a_k$ from left to right, and suppose $a_i$ is chosen to transition to state 4. Then in the case where $C$ has 4 edges entering it (the top left picture in \Cref{fig:smooth_odd_s4_general}), we go through $a_1, a_2, \ldots, a_i$ from left to right, and alternatingly 1) double an edge from each pair of edges uniformly at random and 2) use both edges once, until we reach $a_i$. We do the same on the other side from $a_k, a_{k-1}, \ldots $, until $a_i$. This will cause every child except for $a_i$ to be in state 2, and (since $k$ is odd), $a_i$ will be in state 4. The case where every edge entering $C$ is used an even number of times (the bottom picture in the left box of \Cref{fig:smooth_odd_s4_general}) is similar, except we begin the alternation by using both edges. The transition probabilities in this case are $(0, \frac{k-1}{k}, 0, \frac{1}{k})$, where $k$ is the number of children.
    
    On the other hand, with probability $1-\alpha$, every child transitions to state 4, by either using both edges from each pair if we are in the top case of state 4, or by doubling one edge from each pair uniformly at random if we are in the bottom case of state 4. Overall, the net transition probabilities are then $(0,\alpha \cdot \frac{k-1}{k}, 0, \frac{\alpha}{k} + 1-\alpha)$. As $\alpha$ ranges from 0 to 1, the transition probability to state 2 ranges from $0$ to $\frac{k-1}{k}$. Since this range is most restricted when $k=3$, we conclude it is always feasible to transition out of state 4 with probabilities $(0, w, 0, 1-w)$ for any $w \in [0, \frac{2}{3}]$. 
    
    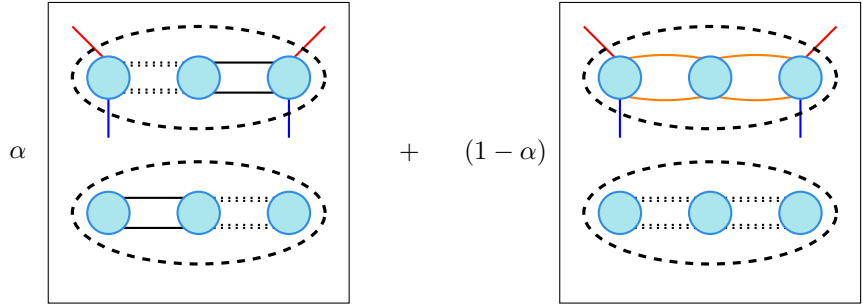
\begin{figure}[!htbp]
    \centering
        \begin{tikzpicture}[scale=0.4]

\foreach \a/\b in {-3/0,0/-4.5}{
	\path[-] (\a,\b+0.5) edge[thick,dotted] node {} (\a+3,\b+0.5);
	\path[-] (\a,\b+0.4) edge[thick,dotted] node {} (\a+3,\b+0.4);	
	\path[-] (\a,\b-0.5) edge[thick,dotted] node {} (\a+3,\b-0.5);
	\path[-] (\a,\b-0.4) edge[thick,dotted] node {} (\a+3,\b-0.4);
}

\foreach \a/\b in {0/0,-3/-4.5}{
	\path[-] (\a,\b+0.5) edge[thick] node {} (\a+3,\b+0.5);
	\path[-] (\a,\b-0.5) edge[thick] node {} (\a+3,\b-0.5);
}

\foreach \a/\b in {14/0,17/0}{
	\path[-] (\a,\b+0.6) edge[thick,orange,bend left=10] node {} (\a+3,\b+0.6);	
	\path[-] (\a,\b-0.6) edge[thick,orange,bend right=10] node {} (\a+3,\b-0.6);
}

\foreach \a/\b in {14/-4.5,17/-4.5}{
	\path[-] (\a,\b+0.5) edge[thick,dotted] node {} (\a+3,\b+0.5);
	\path[-] (\a,\b+0.4) edge[thick,dotted] node {} (\a+3,\b+0.4);	
	\path[-] (\a,\b-0.5) edge[thick,dotted] node {} (\a+3,\b-0.5);
	\path[-] (\a,\b-0.4) edge[thick,dotted] node {} (\a+3,\b-0.4);
}

\foreach \a/\b in {0/0,17/0}{
	\path[-] (\a-3,\b+0.5) edge[thick,red] node {} (\a-4.2,\b+1.7);
	\path[-] (\a+3,\b+0.5) edge[thick,red] node {} (\a+4.2,\b+1.7);
}

\foreach \a/\b in {0/0,17/0}{
	\path[-] (\a+-3,\b-0.5) edge[thick,blue] node {} (\a-3,\b-2);
	\path[-] (\a+3,\b-0.5) edge[thick,blue] node {} (\a+3,\b-2);
}

\foreach \a/\b in {0/0,0/-4.5,17/0,17/-4.5}{
\draw [black,line width=1.2pt, dashed] (\a,\b) ellipse (4.2 and 1.7);
\foreach \c/\d in {\a-3/\b,\a/\b,\a+3/\b}{
\draw[bleudefrance,fill=blizzardblue,thick] (\c,\d) circle (20pt); 
}
}
\draw (-5,2.5) rectangle (5,-7.5);
\draw (12,2.5) rectangle (22,-7.5);
\node [draw=none] at (-6,-2.5) () {$\alpha$};
\node [draw=none] at (7,-2.5) () {$+$};
\node [draw=none] at (10.2,-2.5) () {$(1-\alpha)$};
\end{tikzpicture}
    \caption{Transitions out of state 4 in the general $\po$ chain. On the left, we pick one child uniformly at random to transition to state 4 (we visualize this to be the right child in the figure), and the remaining children transition to state 2. On the right, every child transitions to state 4.}
    \label{fig:smooth_odd_s4_general}
\end{figure}
}
\end{enumerate}

This finishes the proof of why the $\po$ Markov chain is achievable for the ranges of probabilities depicted in \Cref{fig:general_mcs}.


\end{proof}

Note that the rules given in the proof of \Cref{prop:mcs} satisfy the following two invariants: 1) Given that a cycle cut $C$ is in some state, it is equally likely to look like the top picture as the bottom picture of \Cref{fig:patterns}, and 2) Each edge in $E^{\rightarrow}(C)$ is used $\frac12$, $\frac12$, $1$, $1$ times in expectation if $C$ is in state 1, 2, 3, 4, respectively. 

\subsection{Fixed Point and Algorithm}
\label{sec:algo}

\begin{algorithm}[t]
\caption{A randomized $\frac43$-approximation algorithm for half-integral cycle cut TSP.}
\label{alg:main}
\DontPrintSemicolon
\KwIn{A half-integral cycle cut TSP instance $G = (V,E)$ with edge costs $c$.}
\KwOut{An Eulerian tour $T$ with expected cost at most $\frac43$ times that of the Subtour LP.}
\begin{algorithmic}[1]
\STATE Compute $x$, an optimal solution of the Subtour LP.
\STATE Choose any vertex $r \in V$, and compute the hierarchy $\mathcal{H}$ of critical cuts in $V \setminus \{r\}$. 
\IF{$x$ is not half-integral \textbf{or} some cut in $\mathcal{H}$ is not a cycle cut}
\STATE \textbf{Fail} \COMMENT{not a half-integral cycle cut instance}
\ENDIF
\STATE Initialize $T \gets \emptyset$.
\STATE Sample edges entering $V \setminus \{r\}$ according to the distribution $\pv = (\frac49, \frac29, \frac29, \frac19)$. Add these edges to $T$. 
\FOR{each cut $C$ in a depth-first search ordering of $\mathcal{H}$}
\STATE Given the edges in $T$ entering $C$, sample edges connecting the children of $C$ according to the \\ rules described in \Cref{sec:general_chains} using the specific transition probabilities in \Cref{sec:algo}. Add these edges to $T$.
\ENDFOR
\STATE \textbf{Return} $T$.
\end{algorithmic}
\end{algorithm}

We now give the reader some more intuition by giving a specific example of how to maintain distributions in the feasible region $R$ (as defined in \Cref{def:region}), on all the cuts in the hierarchy by choosing appropriate transition probabilities on the Markov chains in \Cref{fig:general_mcs}. This already gives a $\frac43$-approximation algorithm for half-integral cycle cut TSP, which we describe in Algorithm \ref{alg:main}. In \Cref{sec:region_thm}, we will extend the ideas here to show that $R$ is the maximal feasible region achievable through our Markov chains.

Specifically, let $\pv = (\frac49, \frac29, \frac29, \frac19)$ and $\qv = (\frac29, \frac49, \frac29, \frac19)$ (i.e. $\qv$ is $\pv$ with the first two coordinates swapped). It is easy to check that $\pv, \qv \in R$. We now show for any half-integral cycle cut instance, it is possible to make the distribution entering any cycle cut to be either $\pv$ or $\qv$. 

To see this, let $C$ be a cycle cut and suppose $C$ is odd. Set the transition probabilities in $\po$ to be $x = y = z = w = \frac23$. For these probabilities, it is easy to check that $\po \pv = \po \qv = \pv$.\footnote{In fact, it can be checked that for these probabilities, $\po$ maps \emph{every} distribution (whose first two coordinates sum to $\frac23$), to $\pv$.} On the other hand, if $C$ is even, setting $z = w = 1$ in $\pe$ gives $\pe \pv = \pv$, and setting $z = \frac34, w = 1$ gives $\pe \qv = \pv$. Thus, as long as the distribution entering $C$ is $\pv$ or $\qv$, we can make the distribution on each child of $C$ be either $\pv$ (if the child is straight), or $\qv$ (if the child is twisted). Since we have freedom in choosing the distribution on the topmost cycle cut $V \setminus \{r\}$, we can simply set it to be $\pv$, and then following the rules given in \Cref{sec:general_chains} with the above transition probabilities will ensure that the distribution on every cut in the hierarchy is either $\pv$ or $\qv$.

\begin{prop}
\label{prop:algo}
    Algorithm \ref{alg:main} is a $\frac43$-approximation algorithm for half-integral cycle cut instances of the TSP.
\end{prop}
\begin{proof}
    By the above reasoning, Algorithm \ref{alg:main} samples from a distribution of Eulerian tours with the property that the distribution of patterns on each cut in the hierarchy is either $\pv$ or $\qv$. Under $\pv$ and $\qv$, the rules for connecting the children in \Cref{sec:general_chains} guarantee that each edge is used exactly $\frac43 x_e$ of the time in expectation. It remains to verify that the algorithm runs in polynomial time. To begin, the Subtour LP can be solved in polynomial time using the ellipsoid method  since it has an efficient separation oracle \cite{grotschel2012geometric}, which is given by a global min-cut computation (see, e.g. Chapter 3 of \cite{Williamson19}). Moreover, the Subtour LP can be solved very quickly in practice \cite{JohnsonM07}. The hierarchy of critical cuts can also be found efficiently by computing the cactus decomposition of the graph (e.g. \cite{Fleischer99}). Finally, given the hierarchy, sampling the tour just requires going through the cuts in the hierarchy from the top-down (e.g. using a depth-first search), and for each cut following the rules to sample a multiset of edges inside of it. This takes linear time in the size of the graph. 
    
\end{proof}


\subsection{Characterizing the Feasible Region}
\label{sec:region_thm}
We now show that $R$ (as given in \Cref{def:region}), is the maximal feasible region according to the chains in \Cref{fig:general_mcs}. Recall that by "feasible region", we mean that 
1)  If the distribution of patterns entering a cycle cut $C$ belongs to $R$, there is a way to connect up the children of $C$ such that the distribution on each child also belongs to $R$, and 2) for each $\pv \in R$, the corresponding rule for connecting the children of $C$ uses each edge in $E^{\rightarrow}(C)$ at most $\frac23 = \frac43 x_e$ of the time in expectation.
Informally speaking, $R$ is the set of distributions "that guarantee a $\frac43$-approximation all the way down" the hierarchy of cycle cuts. 
\red{In particular, the fact that $R$ is nonempty implies the existence of a $\frac43$-approximation algorithm.}

\begin{remark}
Note that the distribution $(\frac13, \frac13, \frac13, 0)$ lies in $R$. This is the distribution that, among the four edges entering a cycle cut, uses each pair of edges with equal probability (and possibly doubles zero, one or both of the other edges). We find it nice that such a symmetric distribution is feasible.
\end{remark}


Before moving on, note that every distribution in $R$ has a net probability of $\frac23$ to be in states 1 or 2, and a probability of $\frac13$ to be in states 3 or 4. For any cycle cut $C$, since all of our rules for connecting the children use each edge in $E^{\rightarrow}(C)$ $\frac12$ the time in states 1 and 2, and once in expectation in states 3 and 4, every distribution in $R$ automatically uses each edge exactly $\frac23 = \frac43 x_e$ times in expectation. Therefore, checking the feasibility of $R$ boils down to showing that if the distribution of a parent belongs to $R$, then we can make the distribution of the children also belong to $R$. 

We show this in \Cref{thm:region_sufficient}. In other words, $R$ is \emph{sufficient}, in the sense that if the distribution entering a cycle cut belongs to $R$, then it is possible to get a $\frac43$-approximation all the way down the hierarchy using the Markov chains in \Cref{fig:general_mcs}. We complement this by showing in \Cref{thm:region_necessary} that $R$ is \emph{necessary}; if the distribution entering a cycle cut does \textbf{not} belong to $R$, then it is impossible to obtain a $\frac43$-approximation using these Markov chains. 

\begin{theorem}[$R$ is sufficient]
\label{thm:region_sufficient}
If the distribution of patterns entering a cycle cut belongs to $R$, then there are feasible Markov chains (among the ones shown in \Cref{fig:general_mcs}) such that the induced distribution entering each child also belongs to $R$. 
\end{theorem}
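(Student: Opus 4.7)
The plan is to parameterize $\pv \in R$ as $\pv = (2/3 - p,\, p,\, 1/3 - q,\, q)$ with $p \in [0, 2/3]$, $q \in [0, 1/3]$, and $p + q \geq 1/3$, and then verify the claim for each chain by direct algebraic feasibility checks. A first key observation, obtained by inspecting either transition matrix in \Cref{fig:general_mcs}, is that mass from $\{S_1, S_2\}$ always flows into $\{S_1, S_3\}$, while mass from $\{S_3, S_4\}$ always flows into $\{S_2, S_4\}$. Consequently $q_1 + q_3 = p_1 + p_2 = 2/3$ and $q_2 + q_4 = p_3 + p_4 = 1/3$ hold for \emph{any} choice of the free parameters, so the straight-child constraint $q_2 + q_4 \geq 1/3$ is automatic. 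The remaining conditions collapse to two: (E) $q_1 + q_2 = 2/3$, equivalently $q_2 = q_3$; and, for the twisted-child membership in $R$, (I) $q_1 + q_4 \geq 1/3$, equivalently $q_2 \leq 1/3$.

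For $\pe$, the transition probabilities out of $S_1$ and $S_3$ are fixed at $1/2$, so equation (E) simplifies to the single linear equation $zp + wq = 1/6 + (p+q)/2$ with $z, w \in [0, 1]$. The maximum of $zp + wq$ over the box is $p + q$, so feasibility of (E) is exactly the hypothesis $p + q \geq 1/3$. Constraint (I) unpacks to $wq \leq 1/6 + q/2$, which is implied by $w \leq 1$ since $q \leq 1/3$. To show (E) and (I) are simultaneously achievable I would substitute $a := zp$ and $b := wq$ and verify that the segment $\{a + b = 1/6 + (p+q)/2\} \cap [0, p] \times [0, q]$ intersects the half-space $\{b \leq 1/6 + q/2\}$; this reduces to checking endpoint values, a short calculation. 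Nonnegativity of the $q_i$ then follows from $(z, w) \in [0, 1]^2$ together with $p \leq 2/3$, $q \leq 1/3$.

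For $\po$, the parameter ranges are tighter ($x, y \in [1/3, 1]$ and $z, w \in [0, 2/3]$) but there are four free parameters rather than two, so $q_2$ and $q_3$ can be controlled independently. As $(x, z)$ ranges over $[1/3, 1] \times [0, 2/3]$, the quantity $q_3 = (1-x)(2/3 - p) + (1-z)p$ sweeps out $[\,p/3,\, 4/9 + p/3\,]$; similarly, as $(y, w)$ ranges over the same box, $q_2 = y(1/3 - q) + wq$ sweeps $[\,1/9 - q/3,\, 1/3 - q/3\,]$. Equation (E) is feasible exactly when these intervals intersect, which (after taking the max of lower endpoints and the min of upper endpoints) reduces to $p + q \geq 1/3$ and $p + q \leq 1$, both automatic. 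Moreover the intersection sits inside $[\,p/3,\, 1/3 - q/3\,] \subseteq [0, 1/3]$, so constraint (I) comes for free. The main place where the definition of $R$ is genuinely used is the lower bound $p + q \geq 1/3$, which is needed in both chains to make the relevant linear system feasible; I do not expect a significant obstacle beyond careful bookkeeping, and the only mild nuisance in corners ($p = 0$ or $q = 0$) I would handle by working with $a = zp$ and $b = wq$ (and the analogous substitutions in $\po$) throughout, so that feasible ranges remain $[0, p]$ and $[0, q]$ regardless of whether $p$ or $q$ vanishes.
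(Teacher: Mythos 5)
Your proposal is correct and takes essentially the same route as the paper: a direct feasibility check of the transition parameters in $\pe$ and $\po$, reducing membership in $R$ for both the straight child $(q_1,q_2,q_3,q_4)$ and the twisted child $(q_2,q_1,q_3,q_4)$ to the conditions $q_2=q_3$ and $q_2\le\frac13$, and using the hypothesis $p_2+p_4\ge\frac13$ exactly where the paper does (to solve the even-chain equation within the box $z,w\in[0,1]$). One harmless slip: in the odd case your interval-intersection condition is actually vacuous (it only needs $p+q\le 1$, not $p+q\ge\frac13$), which is consistent with the paper's proof, where the fixed choice $x=y=z=w=\frac23$ sends every distribution with $p_1+p_2=\frac23$ to $(\frac49,\frac29,\frac29,\frac19)\in R$ without invoking the third constraint of $R$.
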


\begin{theorem}[$R$ is necessary]
\label{thm:region_necessary}
Suppose the distribution of patterns entering a cycle cut does \textbf{not} belong to $R$. Then it is not possible to obtain a $\frac43$-approximation using the Markov chains in \Cref{fig:general_mcs}. 
\end{theorem}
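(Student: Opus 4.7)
The plan is to prove the contrapositive: if a $\frac{4}{3}$-approximation can be sustained at every cut below $C$ starting from $\mathbf{p}$ (over every hierarchy below $C$ admissible in a cycle-cut instance), then $\mathbf{p} \in R$. The argument rests on two elementary identities derived directly from \Cref{fig:general_mcs}. First, each edge in $E^{\rightarrow}(C)$ is used $1 - \frac{1}{2}(p_1+p_2) = \frac{1}{2} + \frac{1}{2}(p_3+p_4)$ of the time in expectation, so the $\frac{4}{3}$ bound at $C$ itself forces $p_1+p_2 \geq \frac{2}{3}$. Second, in both chains the output $\mathbf{q}$ satisfies $q_2+q_4 = p_3+p_4$; additionally, in $\pe$ (whose transitions out of states $1$ and $3$ are fixed at $\tfrac{1}{2}$-$\tfrac{1}{2}$), one has $q_1+q_2 = \frac{1}{2}(p_1+p_3) + zp_2 + wp_4 \leq 1 - \frac{1}{2}(p_1+p_3)$.

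Using these, I will analyze three disjoint failure modes corresponding to how $\mathbf{p}$ can lie outside $R$. (i) If $p_1+p_2 < \frac{2}{3}$, failure is immediate at $C$ by the first identity. (ii) If $p_1+p_2 \geq \frac{2}{3}$ but $p_2+p_4 < \frac{1}{3}$ (so $p_1+p_3 > \frac{2}{3}$), I argue that in any hierarchy where $C$ has an even child $C^\prime$, the child's distribution satisfies $p^\prime_1 + p^\prime_2 = q_1+q_2 \leq 1 - \frac{1}{2}(p_1+p_3) < \frac{2}{3}$, so case (i) causes failure at $C^\prime$. (iii) If $p_1+p_2 > \frac{2}{3}$ and $p_2+p_4 \geq \frac{1}{3}$ (so $p_3+p_4 < \frac{1}{3}$), any straight child $C^\prime$ satisfies $p^\prime_2+p^\prime_4 = q_2+q_4 = p_3+p_4 < \frac{1}{3}$, reducing to case (ii) applied at $C^\prime$ and yielding failure at a grandchild of $C$.

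Combining these cases shows that sustaining the $\frac{4}{3}$ bound forces $p_1+p_2 = \frac{2}{3}$ and $p_2+p_4 \geq \frac{1}{3}$, which is exactly $R$. The main obstacle I anticipate is justifying the adversarial hierarchy construction — specifically, that below any cycle cut $C$ one can realize a sub-hierarchy with an even child (for case (ii)) and a straight child of the chosen type (for case (iii)). This should follow from the structural freedom of cycle-cut instances, since a range of parities and twist-types can arise from appropriate graphs; making it precise may require depth-induction together with explicit small-instance constructions, and care must be taken with the broader transition ranges available in $\po$ for cuts of arity $k > 3$.
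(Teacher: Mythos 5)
Your proposal is essentially the paper's own argument: the paper likewise reduces necessity to (a) the identity that each edge of $E^{\rightarrow}(C)$ is used $1-\frac12(p_1+p_2)$ of the time, forcing $p_1+p_2\ge\frac23$, (b) the invariance of $p_1+p_2$ (equivalently of $p_3+p_4$) under one transition, and (c) the fixed $\frac12$--$\frac12$ exits from states $1$ and $3$ in $\pe$; its ``apply $\pe$ twice'' step is exactly your case (iii) followed by case (ii), and its Statement~2 is the contrapositive of your case (ii). One slip to repair: the chain used to pass from $C$ to its children is indexed by the parity of $C$ itself (its number of children), not by whether some child is even, so the inequality $q_1+q_2\le 1-\frac12(p_1+p_3)$ in your case (ii) is justified only when the cut being expanded is even (and has a non-singleton child); as literally phrased (``$C$ has an even child $C'$'') it can fail --- e.g.\ an odd cut with entering distribution $(\frac23,0,\frac13,0)\notin R$ can transition under $\po$ (take $x=\frac56$, $y=\frac13$) to $(\frac59,\frac19,\frac19,\frac29)\in R$, so even-parity cuts below are genuinely needed for necessity. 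Finally, the adversarial-hierarchy realization you flag as the main obstacle is not carried out in the paper either: the theorem is read there as a statement about closure of the feasible region under the two chains (worst case over hierarchies), and $\pe$ is simply applied at will, so under that reading your construction step dissolves and your argument matches the paper's level of rigor.
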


\red{
\begin{proof}[Proof of \Cref{thm:region_sufficient}]
Let $C$ be any cycle cut in the hierarchy. Suppose the distribution entering $C$ is $(p_1, p_2, p_3, p_4) \in R$. We consider the 2 cases, depending on if $C$ is even or odd. We show that in each case, there is a valid choice of transition probabilities for the corresponding Markov chain (illustrated in \Cref{fig:general_mcs}), that cause the resulting distribution to also land in $R$. 

\textbf{Case 1. $C$ is even}. Set $w = 1$ and leave $z$ as a variable in $\pe$. Applying the resulting transition matrix to
$(p_1, p_2, p_3, p_4)$ yields the distribution
$$\left(\frac{p_1}{2} + zp_2,\,
\frac{p_3}{2} + p_4,\,
\frac{p_1}{2} + (1-z)p_2,\,
\frac{p_3}{2}\right).$$
Let $z = \frac{1}{p_2}(\frac23 - p_4 - \frac{p_1+p_3}{2})$ (this is the value of $z$ that makes the first two components sum to $\frac23$). To show that it is valid to set $z$ to this value, we have to show that $z \in [0,1]$ (since this is the feasible range for $z$ in the $\pe$ chain). First, $z \geq 0$ because $\frac{p_1 + p_3}{2} + p_4 \leq \frac23$.\footnote{Since $p_1 + p_2 = \frac23$, $\frac{p_1 + p_3}{2} + p_4$ is maximized when $p_1 = \frac23$, $p_2 = p_3 = 0$, and $p_4 = \frac13$.} On the other hand, $z \leq 1$ is equivalent to $$\frac{p_1}{2} + p_2 + \frac{p_3}{2} + p_4 \geq \frac23.$$ Plugging in $p_1 = \frac23 - p_2$ and $p_3 = \frac13 - p_4$, this becomes equivalent to $p_2 + p_4 \geq \frac13$, which is a constraint defining $R$. Thus it is valid to set $z$ to this value. Plugging in this value for $z$ and using $p_3=\frac13 - p_4$, the resulting distribution becomes
$$(q_1, q_2, q_3, q_4) := \left(\frac12 - \frac{p_4}{2}, \,
\frac{1}{6} + \frac{p_4}{2}, \, 
\frac{1}{6} + \frac{p_4}{2}, \, 
\frac{1}{6} - \frac{p_4}{2} \right)$$
It is easy to check that both $(q_1, q_2, q_3, q_4)$ and $(q_2, q_1, q_3, q_4)$ lie in $R$.


\textbf{Case 2. $C$ is odd}. 
Set $x = y = z = w = \frac23$ in $\po$.
Applying the resulting transition matrix to $(p_1, p_2, p_3, p_4)$ yields the distribution
$$
\left( 
\frac23 p_1 + \frac23 p_2, \,
\frac23 p_3 + \frac23 p_4, \,
\frac13 p_1 + \frac13 p_2, \, 
\frac13 p_3 + \frac13 p_4
\right)
= 
\left(
\frac49,\, \frac29,\, \frac29,\, \frac19
\right).
$$
It is easy to check that both $(\frac49, \frac29, \frac29, \frac19)$ and $(\frac29, \frac49, \frac29, \frac19)$ lie in $R$.
\end{proof}
}

\red{
\begin{proof}[Proof of \Cref{thm:region_necessary}]
The result follows from the following two statements.
\begin{enumerate}
    \item Any feasible distribution must have its first two coordinates summing to exactly $\frac23$.
    \item Given a feasible distribution whose first two coordinates sum to $\frac23$, it must in fact be in $R$. 
\end{enumerate} 
We will now prove these two statements.

\emph{Proof of Statement 1.} To show that any feasible distribution must have its first two coordinates summing to exactly $\frac23$, consider a general distribution $(p_1, p_2, p_3, p_4)$. Clearly in order to obtain a $\frac43$-approximation, we must have $p_1 + p_2 \geq \frac23$.\footnote{Since states 1 and 2 use each edge $\frac12$ of the time and states 3 and 4 use each edge once in expectation, $p_1 + p_2 < \frac23$ would imply each edge is used strictly more than $1 \cdot \frac13 + \frac12 \cdot \frac23 = \frac23$ times in expectation.} Thus we just need to show $p_1 + p_2$ cannot be strictly larger than $\frac23$. To prove this, suppose  $p_1 + p_2 > \frac23$. We will obtain a contradiction by applying $\pe$ \emph{twice}. 
Applying $\pe$ once with $z = z_1$ and $w = w_1$ (for some $z_1 \in [0,1], w_1 \in [0,1]$) to $(p_1, p_2, p_3, p_4)$, we get the distribution 
$$(q_1, q_2, q_3, q_4) := \left(\frac{p_1}{2} + p_2 z_1, \frac{p_3}{2} + p_4w_1, \frac{p_1}{2} + p_2(1 - z_1), \frac{p_3}{2} + p_4(1-w_1)\right).$$ In particular, note that $q_1 + q_3 = p_1 + p_2 > \frac23$. Applying $\pe$ a second time to $(q_1, q_2, q_3, q_4)$, with $z = z_2$ and $w = w_2$, we get a distribution whose first two coordinates sum to
$$\left(\frac{q_1}{2} + q_2z_2\right)+ \left(\frac{q_3}{2} + q_4w_2\right) \leq \frac{q_1+q_3}{2} + q_2 + q_4 = 1 - \frac{q_1+q_3}{2} < \frac23.$$
Since the first two coordinates of this distribution sum to strictly less than $\frac23$, it cannot give a $\frac43$-approximation.

\emph{Proof of Statement 2.} Having just shown that any feasible distribution must have its first two coordinates summing to exactly $\frac23$, we now show that any such distribution must in fact lie in $R$. Consider a general distribution whose first two coordinates sum to $\frac23$; we can write it as $(p_1, p_2, p_3, p_4)$ where $p_1 + p_2 = \frac23$. To show this point lies in $R$, we just need to show that $p_2 + p_4 \geq \frac13$. 
Applying $\pe$ to the input $(p_1, p_2, p_3, p_4)$, we obtain
$$\left( 
\frac12 p_1 + zp_2,\,
\frac12 p_3 + w p_4, \,
\frac12 p_1 + (1-z)p_2, \,
\frac12 p_3 + (1-w)p_4
\right).$$
We need the first two components to sum to $\frac23$, which means $\frac12 p_1 + zp_2 + \frac12 p_3 + w p_4 = \frac23$. Plugging in $x,w \leq 1$, we get $\frac12 p_1 + p_2 + \frac12 p_3 + p_4 \geq \frac23$.
Finally, using $p_1 + p_2 + p_3 + p_4 = 1$, we obtain $p_2 + p_4 \geq \frac13$.

\end{proof}
}

\begin{remark}
    Algorithm \ref{alg:main} can now be modified to use any initial distribution $\pv \in R$, not just $(\frac49, \frac29, \frac29, \frac19)$. To do this, simply begin by sampling edges entering $V \setminus \{r\}$ according to $\pv$. Then, given the edges entering a parent cycle cut, connect up its children using the rules given in the proof of \Cref{prop:mcs}, according to the transition probabilities given in the proof of \Cref{thm:region_sufficient}. 
\end{remark}

\begin{remark}
The "tightness" of the feasible region is respect to our Markov chains in \Cref{fig:general_mcs}. It is possible that there are other patterns / Markov chains that would give rise a larger feasible region. 
\end{remark}



\section{Conclusion and Open Questions}

As discussed at the end of the introduction, our result leads to several interesting open questions.  One such open question is whether our result extends to the case of cycle cuts for non-half-integral solutions. We believe this to be possible through a more refined understanding of the patterns that result from considering non-half-integral solutions.

Clearly a better understanding of what happens in the case of degree cuts is needed to make substantial progress on the overall half-integral case.  As discussed in the introduction, we think it is possible to improve incrementally on the 1.4983-approximation of Gupta et al. \cite{GuptaLLMNS22} by using a combination of ideas from this paper with a few other small improvements. Recall that in a degree cut, each vertex has degree four, there are no parallel edges, and every non-trivial cut has at least six edges in it.  Ideally one would be able to show that any distribution on a parent cut lying in the feasible region of \cref{fig:region} could be used to induce a distribution on patterns of the children of the degree cut in a subregion of the feasible region with each edge used at most 2/3 of the time; such a result would lead immediately to a 4/3 integrality gap for half-integral instances.  

\iftoggle{notblinded}{
\subsection*{Acknowledgments}

The first and third authors would like to thank Anke van Zuylen for early discussions on this problem.  The first and third authors were supported in part by NSF grant CCF-2007009.  The first author was also supported by NSERC fellowship PGSD3-532673-2019. The second author was supported in part by NSF grants DGE-1762114, CCF-1813135, and CCF-1552097. \red{We would like to thank Martin Drees for his helpful suggestions that allowed us to simplify the proof of the main result.}

}{}

\printbibliography
\end{document}